\documentclass{article}
\usepackage{amssymb}
\usepackage{amsmath}
\usepackage{amsthm}
\usepackage{color}
\usepackage[colorlinks]{hyperref}
\usepackage[a4paper]{geometry}
\usepackage{mathtools}
\usepackage{algorithm}
\usepackage{algpseudocode}
\usepackage{placeins}
\usepackage{booktabs}
\usepackage{makecell}
\usepackage{longtable}

\newtheorem{theorem}{Theorem}
\newtheorem{lemma}{Lemma}
\newtheorem{remark}{Remark}

\newcommand{\wt}{\operatorname{wt}}

\newcommand{\Hull}{\operatorname{Hull}}

\title{Constraint-Preserving Genetic Algorithms for Embedding Linear Codes into Self-Orthogonal Codes}
\author{Haeun Lim\thanks{haeunlim@sogang.ac.kr, Department of Mathematics and Institute for Mathematical and Data Sciences, Sogang University, Seoul, Korea}, Junmin An\thanks{junmin0518@sogang.ac.kr, Department of Mathematics and Institute for Mathematical and Data Sciences, Sogang University, Seoul, Korea}, Jon-Lark Kim\thanks{jlkim@sogang.ac.kr, Department of Mathematics and Institute for Mathematical and Data Sciences, Sogang University, Seoul, Korea}}
\date{}

\begin{document}
\maketitle

\begin{abstract}
In this paper, we aim to construct binary optimal self-orthogonal codes using shortest self-orthogonal embedding methods. For this purpose, we design a heuristic framework based on a genetic algorithm. We explore the search space of shortest self-orthogonal embeddings using a fitness function based on the minimum distance and the number of minimum-weight codewords. We construct \emph{constraint-preserving} crossover and mutation operations so that every chromosome yields a valid self-orthogonal embedding, while high-fitness structural features, such as favorable subsequences of orthogonal generators, are propagated across generations. We also analyze the time and storage complexity of the algorithm, and validate our design through an ablation study on guided crossover and a comparison with random search under an equal time budget. Using this method, we obtain $66$ new binary optimal self-orthogonal codes that meet the upper bound, together with $135$ further self-orthogonal codes attaining the best minimum distance found so far.\\

\textbf{Keywords} : self-orthogonal code, LCD code, embedding of codes, codes over rings
\end{abstract}

\section{Introduction}
Coding theory originated from Shannon's groundbreaking paper `A Mathematical Theory of Communication'~\cite{mat-theory-comm} and has developed into the study of error-correcting codes, which are used to detect and correct errors in noisy communication channels. One of the most fundamental problems in coding theory is to find optimal codes, namely, codes with the largest possible minimum distance for a given length and dimension. Finding optimal codes is generally a difficult problem, except for relatively small lengths and dimensions. Hence, there were many attempts to construct new optimal codes from previously known optimal codes. There are many methods that allow us to construct new codes from existing codes, including extension, puncturing, shortening, gluing~\cite{ps-1975}, Plotkin construction~\cite{p-1960}, concatenation~\cite{f-1966}, Construction X~\cite{s-1972}, matrix-product codes~\cite{bn-2001}, subfield subcodes~\cite{d-1975}, etc. By applying these methods, numerous new optimal codes have been constructed.

One of the most important classes of codes in coding theory is self-orthogonal codes. Ding~\cite{d-2009} showed that self-orthogonal codes achieve the asymptotic Gilbert-Varshamov bound. Pless and Sloane~\cite{ps-1975} constructed unimodular lattices using self-dual codes, which form a subclass of self-orthogonal codes. Calderbank et al.~\cite{crss-1998} showed that constructing quantum error-correcting codes is equivalent to finding additive self-orthogonal codes over $\mathbb{F}_4$.

In this direction, methods for constructing optimal self-orthogonal codes have been widely studied. One such well-known method is the building-up construction, which constructs a self-orthogonal $[n+2,k+1]$ code from a given self-orthogonal $[n,k]$ code~\cite{kl-2004, lk-2015, h-2021, k-2023}. Kim et al.~\cite{kll-2026} and Shi et al.~\cite{st-2026} proposed methods to expand an $[n,k]$ self-orthogonal code into an $[n,k+1]$ self-orthogonal code over finite fields $\mathbb{F}_q$ and $\mathbb{Z}_4$, respectively.

Recently, embedding method has been actively studied as a way to construct self-orthogonal codes by adding columns to a generator matrix of a given linear $[n,k]$ code. One of the central problems in embedding method is to determine the minimum number of columns required to embed a given code into a self-orthogonal code. The embedding method was first introduced in~\cite{kkl-2021}, where the authors proposed an algorithm for constructing self-orthogonal codes over $\mathbb{F}_2$ by appending the minimum number of columns to a generator matrix of a linear code of dimension $3$ or $4$. Choi et al.~\cite{kc-2022} extended the original results to dimension $6$. An et al.~\cite{akklw-2025} determined the minimum number of columns required to embed a given code into a self-orthogonal code over $\mathbb{F}_2$. Wang et al.~\cite{wl-2026} generalized these results to linear codes over arbitrary finite fields $\mathbb{F}_q$. Most recently, An et al.~\cite{akl-2026} studied the embedding of a linear code over $\mathbb{Z}_4$ into self-orthogonal codes.

Although several new optimal codes have been discovered using these embedding methods, it is almost impossible, especially for codes with large dimensions, to search through all possible embeddings in order to find optimal codes. Thus, we aim to address this problem using a heuristic algorithm. Among heuristic algorithms, we choose a genetic algorithm for the self-orthogonal embedding.

Genetic algorithms(GA) are population-based metaheuristic optimization methods inspired by the mechanisms of natural selection and evolution~\cite{h-1975}. Through gene-like operators such as crossover and mutations, GAs iteratively evolve candidate solutions toward optimality. Genetic algorithm is suitable for large-scale search problems, especially when the genetic operations can be designed to preserve the structural constraints of the problem. As a recent example of this approach in coding theory, Won et al.~\cite{w-2026} successfully developed modified genetic algorithms to systematically construct new quaternary Hermitian LCD codes. By designing custom operators—such as modified crossover and complement—that intrinsically satisfy the strict algebraic conditions of quaternary Hermitian LCD codes, their algorithms effectively navigated the highly restricted search space.

Inspired by these successful applications of evolutionary computation, we propose a tailored genetic algorithm to discover the shortest self-orthogonal embeddings with high minimum distances. Our main contribution is to reformulate the shortest self-orthogonal embedding problem into a form suitable for a genetic algorithm. We encode chromosomes using generators of the orthogonal group so that genetic operations preserve the structural constraints of self-orthogonal embedding, and then apply a genetic algorithm to explore the search space of valid embeddings. We further analyze the time and storage complexity of the algorithm, and validate our design through experiments: an ablation study confirms the benefit of the guided crossover operator, and a comparison with random search under an equal time budget demonstrates the effectiveness of the genetic algorithm. As a result, we have found $66$ new optimal self-orthogonal codes over $\mathbb{F}_2$ whose optimal minimum distances were previously not known, together with $135$ further self-orthogonal codes attaining the best minimum distance obtained so far.

The remainder of this paper is organized as follows. Section 2 reviews the basic notions of coding theory and genetic algorithms. Section 3 formulates the shortest self-orthogonal embedding problem, presents our constraint-preserving genetic algorithm, and analyzes its complexity. Section 4 reports the numerical results, together with the running time, an ablation study, and a comparison with random search. Finally, Section 5 concludes the paper.

\section{Preliminaries}
\subsection{Introduction to Coding Theory}
Let $\mathbb{F}_2$ be the binary field with two elements. A \textit{code} $\mathcal{C}$ over $\mathbb{F}_2$ is a subset of $\mathbb{F}_2^n$ for $n \geq 1$. In this case, $n$ is the length of $\mathcal{C}$. Elements of $\mathcal{C}$ are called \textit{codewords}. A \textit{binary linear code} is a $k$-dimensional subspace of $\mathbb{F}_2^n$ and denoted by an $[n,k]$ code. We only consider binary linear codes in this paper. A \textit{generator matrix} $G$ for an $[n,k]$ code $\mathcal{C}$ is a binary $k \times n$ matrix whose rows form a basis for $\mathcal{C}$. For an $[n,k]$ code  $\mathcal{C}$, there is a generator matrix $G$ such that
\[
G = [~I_k~~~A~]
\]
where $I_k$ is the $k \times k$ identity matrix. Such a generator matrix is called a \textit{standard generator matrix} for $\mathcal{C}$ and is denoted by $G(\mathcal{C})$. Two codes $\mathcal{C}_1$ and $\mathcal{C}_2$ over $\mathbb{F}_2$ are called \textit{equivalent} if there exists a permutation of columns $\sigma$ where $\sigma \mathcal{C}_1 = \mathcal{C}_2$. The \textit{permutation automorphism group} of $\mathcal{C}$, denoted by $\mathrm{Aut}(\mathcal{C})$, is the group of all coordinate permutations that fix $\mathcal{C}$, that is, $\mathrm{Aut}(\mathcal{C}) = \{\sigma \in S_n ~|~ \sigma\mathcal{C} = \mathcal{C}\}$, where $S_n$ is the symmetric group on the $n$ coordinate positions. Its order is denoted by $|\mathrm{Aut}(\mathcal{C})|$. Since equivalent codes have isomorphic permutation automorphism groups, the order $|\mathrm{Aut}(\mathcal{C})|$ is an invariant of the equivalence class of $\mathcal{C}$. Hence two codes with the same parameters $[n,k,d]$ but distinct automorphism group orders are inequivalent, which we use to distinguish the codes obtained in Section 4.
The \textit{dual} $\mathcal{C}^{\perp}$ of a code $\mathcal{C}$ is defined as the set $\{\mathbf{x} \in \mathbb{F}_2^n~|~\mathbf{x} \cdot \mathbf{c} = 0  \textrm{ for all } \mathbf{c} \in \mathcal{C}$ \} where $\cdot$ is the standard dot product. For an $[n, k]$ code $\mathcal{C}$ over $\mathbb{F}_2$, $\dim \mathcal{C}^{\perp} = n-k$. If $\mathcal{C} \subseteq \mathcal{C}^{\perp}$, then $\mathcal{C}$ is called a \textit{self-orthogonal code}. Moreover, if $\mathcal{C} = \mathcal{C}^\perp$, then it is called a \textit{self-dual code}. Given any code $\mathcal{C}$, the dual of $\mathcal{C}$ is a linear code. Consequently, a self-dual code is a linear code. The \textit{hull} of $\mathcal{C}$, denoted by $\mathrm{Hull}(\mathcal{C})$ is defined as \[
\mathrm{Hull}(\mathcal{C}) = \mathcal{C} \cap \mathcal{C}^\perp.
\]
In particular, $\mathcal{C}$ is self-orthogonal if and only if $GG^{T}=\mathcal{O}$.
The \textit{Hamming weight} of a vector $\mathbf{x} \in \mathbb{F}_2^n$, denoted by $\wt(\mathbf{x})$, is the number of 1's in $\mathbf{x}$. For two vectors $\mathbf{x}, \mathbf{y}$ in $\mathbb{F}_2^n$, the \textit{Hamming distance} between $\mathbf{x}$ and $\mathbf{y}$ is defined as
\[
d(\mathbf{x}, \mathbf{y})=\wt(\mathbf{x} - \mathbf{y}).
\]
Moreover, the weight of the sum of two vectors $\mathbf{x}, \mathbf{y}$ is defined as \[
\wt(\mathbf{x}+ \mathbf{y})=\wt(\mathbf{x})+\wt(\mathbf{y})-2(\mathbf{x} * \mathbf{y})
\] where $\mathbf{x}*\mathbf{y}$ denotes the number of coordinates in which both vectors have entry 1.
The \textit{minimum (Hamming) distance} of a code $\mathcal{C}$, denoted by $d(\mathcal{C})$, is the minimum of the distances between any two distinct codewords in $\mathcal{C}$. An $[n,k]$ code $\mathcal{C}$ with minimum distance $d$ is denoted by an $[n,k,d]$ code. For linear codes, minimum distance equals the \textit{minimum (Hamming) weight} among its nonzero codewords in $\mathcal{C}$. For a code $\mathcal{C}$ over $\mathbb{F}_2$, if every codeword of $\mathcal{C}$ has even weight, $\mathcal{C}$ is called an \textit{even code}. Otherwise, if it contains an odd-weight codeword, it is called an \textit{odd-like code}.
Given $n$ and $k$, we denote by $d(n, k)$ the largest possible minimum distance over all linear $[n, k]$ codes. A linear $[n,k]$ code $\mathcal{C}$ with minimum distance $d(\mathcal{C})$ is called \textit{optimal} if
\[
d(\mathcal{C})=d(n,k).
\]
Similarly, $d_{SO}(n, k)$ is the largest possible minimum distance over all self-orthogonal $[n, k]$ codes. A self-orthogonal $[n,k]$ code $\mathcal{C}$ is called an \textit{optimal self-orthogonal code} if
\[
d(\mathcal{C}) = d_{SO}(n, k).
\]
For an $[n,k]$ code $\mathcal{C}$, let $S$ be a subset of the set of coordinate positions of $\mathcal{C}$. The code obtained by removing the columns indexed by $S$ from a generator matrix $G(\mathcal{C})$ is called a \textit{punctured code} of $\mathcal{C}$. The punctured code has length $n-m$, where $m = |S|$.
Given an $[n, k, d]$ code $\mathcal{C}$ and $\mathbf{c}$, a vector of weight $d$ in $\mathcal{C}$, define the residual code of $\mathcal{C}$ with respect to $\mathbf{c}$, denoted by $\mathrm{Res}(\mathcal{C}, \mathbf{c})$, as the code of length $n-d$, obtained by puncturing all coordinate positions where $\mathbf{c}$ has ones.
\begin{theorem}[{\cite{Introduction}}]
If $\mathcal{C}$ is an $[n, k, d]$ code over $\mathbb{F}_2$ and $\mathbf{c}$ is a codeword in $\mathcal{C}$ of weight $d$, then $\mathrm{Res}(\mathcal{C}, \mathbf{c})$ is an $[n-d, k-1, d']$ code where $\displaystyle d' \geq \left\lceil\frac{d}{2}\right\rceil$.
\end{theorem}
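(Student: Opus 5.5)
Without loss of generality we permute coordinates so that $\mathbf{c}=(1,\dots,1,0,\dots,0)$ with its $d$ ones in the first $d$ positions; permuting coordinates changes neither the parameters of $\mathcal{C}$ nor those of $\mathrm{Res}(\mathcal{C},\mathbf{c})$. Let $\pi:\mathcal{C}\to\mathbb{F}_2^{n-d}$ be the projection onto the last $n-d$ coordinates, so $\mathrm{Res}(\mathcal{C},\mathbf{c})=\pi(\mathcal{C})$. The length $n-d$ is immediate. The plan is to prove two things: $\ker\pi=\{\mathbf{0},\mathbf{c}\}$, which gives the dimension $k-1$, and that every nonzero vector of $\pi(\mathcal{C})$ has weight at least $\lceil d/2\rceil$.

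For the kernel, take $\mathbf{x}\in\mathcal{C}$ with $\pi(\mathbf{x})=\mathbf{0}$. Then $\mathbf{x}$ is supported inside the support of $\mathbf{c}$. If $\mathbf{x}\ne\mathbf{0}$ and $\mathbf{x}\neq\mathbf{c}$, then $\mathbf{x}+\mathbf{c}$ is a nonzero codeword with $\wt(\mathbf{x}+\mathbf{c})=d-\wt(\mathbf{x})$. Since $\wt(\mathbf{x})\ge d$, this weight is at most $0$, which contradicts $\mathbf{x}+\mathbf{c}\neq\mathbf{0}$. Hence $\ker\pi=\{\mathbf{0},\mathbf{c}\}$ has dimension $1$, and $\dim\pi(\mathcal{C})=k-1$. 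This uses $k\ge 1$, which holds because $\mathbf{c}$ exists.

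For the distance bound, take $\mathbf{x}\in\mathcal{C}$ with $\pi(\mathbf{x})\neq\mathbf{0}$, so $\mathbf{x}\notin\{\mathbf{0},\mathbf{c}\}$. Write $w=\mathbf{x}*\mathbf{c}$ for the weight of $\mathbf{x}$ on the first $d$ coordinates, and $w'=\wt(\pi(\mathbf{x}))$. Both $\mathbf{x}$ and $\mathbf{x}+\mathbf{c}$ are nonzero codewords, so
\[
w+w'\ge d \quad\text{and}\quad (d-w)+w'\ge d,
\]
where the second uses the weight formula $\wt(\mathbf{x}+\mathbf{c})=\wt(\mathbf{x})+\wt(\mathbf{c})-2(\mathbf{x}*\mathbf{c})$ from the preliminaries. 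Adding the two inequalities gives $2w'\ge d$, and since $w'$ is an integer, $w'\ge\lceil d/2\rceil$. Therefore $d'\ge\lceil d/2\rceil$.

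No step is a real obstacle. The only point needing care is the kernel computation, and specifically the use of the minimality of $d$ together with $\mathbf{x}+\mathbf{c}\in\mathcal{C}$. The same pairing of $\mathbf{x}$ with $\mathbf{x}+\mathbf{c}$ is what drives the averaging argument for the distance bound.
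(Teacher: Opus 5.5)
Your proof is correct: the kernel computation, which uses $\mathbf{x}+\mathbf{c}\in\mathcal{C}$ and the minimality of $d$, gives dimension $k-1$, and pairing $\mathbf{x}$ with $\mathbf{x}+\mathbf{c}$ gives $2w'\ge d$. The paper does not prove this result itself but cites it from the textbook reference, and the standard argument there (project away the support of $\mathbf{c}$, then compare $\mathbf{x}$ with $\mathbf{x}+\mathbf{c}$, or equivalently choose whichever of the two has at most $\lfloor d/2\rfloor$ ones on that support) is essentially the one you give.
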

\subsection{Genetic Algorithm}
A genetic algorithm is a population-based heuristic search algorithm inspired by the mechanisms of natural evolution and adaptation. It is used to find good approximate solutions to optimization problems, especially when the search space is too large for exhaustive search. In a genetic algorithm, candidate solutions are represented as artificial \textit{chromosomes}, and each chromosome is evaluated by a \textit{fitness function}, which measures how promising the chromosome is as a solution. For a comprehensive treatment of genetic algorithms, we refer the reader to~\cite{goldberg-1989}.
The basic idea of a genetic algorithm is that high-quality ``parent" candidate solutions from different regions in the space may produce high-quality ``offspring" candidate solutions. Based on this idea, chromosomes with higher fitness values are more likely to be selected as parents. The selected parents then generate new chromosomes through \textit{crossover}, which combines parts of two parent chromosomes. In addition, \textit{mutation} is applied to introduce small random changes to the offspring chromosomes and to maintain diversity in the population.
We now describe the main steps of a genetic algorithm in more detail.

\textbf{Chromosomes and genes.}
A chromosome represents a candidate solution to the given problem. It is often encoded as a bit string, although the encoding may vary depending on the problem. The basic components of a chromosome are called \textit{genes}. In a bit-string representation, genes may be single bits or short blocks of adjacent bits encoding particular elements of the candidate solution.

\textbf{Selection.}
Selection is the process of choosing chromosomes from the current population to serve as parents for the next generation. We also use \textit{elitism}, which preserves the best chromosomes from the previous generation. After preserving the elite chromosomes, the remaining positions in the next generation are filled through selection. For the parent selection mechanism, we employ \textit{tournament selection}. In each selection event, $t$ candidate chromosomes are randomly drawn from the population, and the individuals with high fitness function values among them are selected. This procedure is iteratively executed until the required number of parents is collected.

\textbf{Crossover.}
Crossover combines two parent chromosomes to produce offspring. In one-point crossover, a crossover point is chosen, and the subsequences before and after this point are exchanged between the two parents. For example, if the two parent strings are
\[
10001000
\qquad\text{and}\qquad
11111111,
\]
then crossing over after the third position produces
\[
10011111
\qquad\text{and}\qquad
11101000.
\]
This operation roughly mimics biological recombination. If each pair of parents produces two offspring and the population is replaced by the same number of chromosomes, then the population size remains unchanged.

\textbf{Mutation.}
Mutation introduces random changes into a chromosome. In the bit-string representation, mutation is often implemented by flipping randomly selected bits. More generally, mutation operations can be designed according to the structure of the problem.

The sequence of selection, crossover, and mutation is iterated for a predetermined number of \textit{generations}, or until a prescribed stopping criterion is satisfied.
In many applications of genetic algorithms, chromosomes can be generated and modified rather freely. However, in some problems, not every formal chromosome represents a feasible solution. In such cases, the chromosome must satisfy certain structural constraints, and the genetic operations must be designed so that these constraints are preserved after crossover and mutation.

\section{Problem Formulation and a Constraint-Preserving Genetic Algorithm}
In this section, we introduce the shortest self-orthogonal embedding problem and present a customized genetic algorithm suitable for our problem. Specifically, we construct an algorithmic framework equipped with constraint-preserving operators to efficiently navigate this restricted search space.
\subsection{Shortest self-orthogonal embeddings and the feasibility condition}
Given an $[n, k]$ code $\mathcal{C}$ over $\mathbb{F}_2$, an $[n', k]$ self-orthogonal code $\mathcal{C}'$ is called a \textit{self-orthogonal embedding} of $\mathcal{C}$ over $\mathbb{F}_2$ if $\mathcal{C}$ is obtained by puncturing $\mathcal{C}'$. Let $m = n'-n$. Then there exists a set $\mathcal{I}=\{i_1,\ldots,i_m\}$ of coordinate positions of $\mathcal{C}'$ such that puncturing $\mathcal{C}'$ at the coordinates in $\mathcal{I}$ gives $\mathcal{C}$. Up to equivalence, we may assume that the columns indexed by $\mathcal{I}$ are placed in the last $m$ coordinate positions. Thus, for a self-orthogonal embedding $\mathcal{C}'$ of $\mathcal{C}$, we consider a generator matrix of $\mathcal{C}'$ of the form \[
G(\mathcal{C}') = [G(\mathcal{C})~|~A].
\]
For a code $\mathcal{C}$ over $\mathbb{F}_2$, consider the code generated by $G' = [G(\mathcal{C})~|~G(\mathcal{C})]$. Since
\[
G'G'^T = G(\mathcal{C})G(\mathcal{C})^T + G(\mathcal{C})G(\mathcal{C})^T = \mathcal{O},
\]
the code generated by $G'$ is self-orthogonal. It follows that any linear code over $\mathbb{F}_2$ has a self-orthogonal embedding.
In coding theoretic perspective, codes with smaller length are often more efficient. Hence, it is natural to ask for the minimum number of columns that must be added to obtain a self-orthogonal embedding. A \textit{shortest self-orthogonal embedding} of $\mathcal{C}$ is a self-orthogonal embedding of $\mathcal{C}$ whose length is the smallest among all self-orthogonal embeddings of $\mathcal{C}$. Using the notion of the hull of a code $\mathcal{C}$, An et al. determined the minimum number of columns needed to embed $\mathcal{C}$ into a self-orthogonal code as follows.
\begin{theorem}\cite{akklw-2025}\label{akklw-2025}
    Let $\mathcal{C}$ be a binary $[n,k]$ code and let $\ell=\dim \Hull(\mathcal{C})$. Then the length $n'$ of a shortest self-orthogonal embedding of $\mathcal{C}$ is
    \[
    n' = \begin{cases}
        n+k-\ell+1,   &\mathrm{if~}\mathcal{C}\textup{~is~even,}\\
        n+k-\ell,     &\mathrm{if~}\mathcal{C}\textup{~is~odd-like.}
    \end{cases}
    \]
\end{theorem}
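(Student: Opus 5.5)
The plan is to recast the problem in terms of symmetric bilinear forms over $\mathbb{F}_2$. Fix a generator matrix $G=G(\mathcal{C})$ and put $M=GG^T$. This is a $k\times k$ symmetric matrix, and its rank is $k-\ell$, because its kernel (viewed as a form on $\mathbb{F}_2^k$) is exactly the set of message vectors $x$ with $xG\in\Hull(\mathcal{C})$. A matrix $[G\,|\,A]$ generates a self-orthogonal embedding exactly when $AA^T=M$, where $A$ is $k\times m$. The rank of $G$ is preserved automatically, since $G$ already has rank $k$. So $n'-n$ is the smallest $m$ for which $M$ can be written as $AA^T$. The diagonal of $M$ has entries $\wt(g_i)\bmod 2$. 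Since $\wt(x+y)\equiv \wt(x)+\wt(y) \pmod 2$, the map $x\mapsto xMx^T=\wt(xG) \bmod 2$ is linear. Therefore $M$ is alternating (zero diagonal) if and only if $\mathcal{C}$ is even.

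\textbf{Lower bound.} The first bound is $m\ge \operatorname{rank}(AA^T)=k-\ell$. For the even case I would add the following argument. Suppose $m=k-\ell$ with $AA^T=M$ alternating. Then every row of $A$ has even weight, so every row lies in the even-weight subspace. Moreover $\operatorname{rank}A\ge k-\ell=m$, so the rows of $A$ span all of $\mathbb{F}_2^m$. That contradicts their all lying in the proper subspace of even-weight vectors. The identity $\operatorname{rank}A= m$ together with rows spanning $\mathbb{F}_2^m$ makes this clean. Hence $m\ge k-\ell+1$ when $\mathcal{C}$ is even.

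\textbf{Upper bound.} Here I would use the classification of symmetric bilinear forms over $\mathbb{F}_2$. Choose an invertible $P$ such that $PMP^T$ is the nondegenerate part $N$ of size $r=k-\ell$, padded with zeros. If $M$ is non-alternating, then $N$ can be taken to be $I_r$. If $M$ is alternating, $r$ is even and $N$ is a direct sum of $r/2$ hyperbolic blocks $H=\begin{pmatrix}0&1\\1&0\end{pmatrix}$. If $N=BB^T$, then $A=P^{-1}[B;0]$ gives $AA^T=M$. So it suffices to factor $N$. In the first case, $I_r=I_rI_r^T$ with $m=r$. In the second case, $H=BB^T$ with $B=\begin{pmatrix}1&1&0\\1&0&1\end{pmatrix}$, which has rows of even weight and inner product $1$. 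This uses three columns for two rows, so a naive sum would give $3r/2$ columns. The refinement is to realize $\bigoplus H$ inside $\mathbb{F}_2^{r+1}$. Take vectors $u_i=e_{2i-1}+e_{r+1}$ and $v_i=e_{2i}+e_{r+1}$ for $i=1,\dots,r/2$. Check: $u_i\cdot u_i=0$, $u_i\cdot v_i=1$, and $u_i\cdot u_j=1$ for $i\neq j$, which is bad. So I would instead seek a basis adapted to the standard form on the even-weight subspace $E$ of $\mathbb{F}_2^{r+2}$ modulo $\mathbf{1}$, or directly on $\mathbb{F}_2^{r+1}$. The standard form restricted to the even-weight subspace of $\mathbb{F}_2^{r+1}$ (which has dimension $r$, with $r+1$ odd so $\mathbf{1}\notin E$) is nondegenerate and alternating. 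By the classification it is therefore isometric to $\bigoplus H$. This gives $B$ of size $r\times(r+1)$ and hence $m=k-\ell+1$.

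\textbf{Main obstacle.} The step I expect to be hardest is this last one: showing that the form on the even-weight subspace of $\mathbb{F}_2^{r+1}$ is nondegenerate (when $r+1$ is odd). Its radical is $E\cap E^\perp=E\cap\langle\mathbf{1}\rangle=0$, which works out. The remaining care is in the classification step for the non-alternating case, namely that a nondegenerate non-alternating symmetric form is diagonalizable to $I_r$. I would cite this or prove it by splitting off a vector $x$ with $xNx^T=1$ and inducting, handling the usual case where the complement becomes alternating via the identity $I_1\oplus H\cong I_3$.
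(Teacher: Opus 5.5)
Your proof is correct and complete, apart from the boundary case discussed below. The paper does not prove this statement; it is quoted from \cite{akklw-2025}, so there is no in-paper argument to compare step by step. Your route is nonetheless the natural one, and it matches the reduction the paper uses right afterwards.

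Choosing $P$ with $PMP^T=N\oplus\mathcal{O}$ is the same as choosing a generator matrix $PG$ whose last $\ell$ rows span $\Hull(\mathcal{C})$. Then $N$ is the nonsingular Gram matrix of the complementary rows. Multiplying your $[\,G\mid A\,]$, with $A=P^{-1}\begin{bmatrix}B\\ \mathcal{O}\end{bmatrix}$, on the left by $P$ gives exactly the shape $\begin{bmatrix} H & \mathcal{O}\\ A & B\end{bmatrix}$ of Theorem~\ref{zero-embedding}, up to the order of the row blocks. There the paper's $A$ is the block of complementary rows, not your appended block, and its condition $BB^T=AA^T$ is your $BB^T=N$.

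Everything then rests on a classical fact. For $N$ nonsingular symmetric of size $r$ over $\mathbb{F}_2$, the least number of columns of a $B$ with $BB^T=N$ is $r$ if $N$ has a nonzero diagonal entry, and $r+1$ if $N$ is alternating. Your lower bound here is classification-free: it uses only rank and the fact that an alternating Gram matrix forces every row of the appended block into the even-weight hyperplane. Your upper bound via the even-weight subspace $E\subset\mathbb{F}_2^{r+1}$ is also right. Its radical is $E\cap E^{\perp}=E\cap\langle\mathbf{1}\rangle=0$, and that computation already settles what you called the main obstacle.

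There are three things to tidy.

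\begin{itemize}
\item \textbf{The case $\ell=k$.} A self-orthogonal $\mathcal{C}$ is even, so the formula as quoted gives $n'=n+1$. Yet $\mathcal{C}$ itself, punctured on the empty set, is a self-orthogonal embedding with $n'=n$. Your claim that the even-weight vectors form a \emph{proper} subspace of $\mathbb{F}_2^m$ is exactly where $m=k-\ell\ge 1$ is used, and it fails for $m=0$. So either add the hypothesis that $\mathcal{C}$ is not self-orthogonal, which the paper imposes in its experiments, or adopt the convention $n'>n$. Under that convention, $[\,G\mid\mathbf{0}\,]$ shows the formula also holds when $\ell=k$.

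\item \textbf{The abandoned attempt.} Delete the $u_i,v_i$ construction from the write-up. If you want to avoid a separate appeal to the symplectic normal form in the upper bound, you can get the alternating case in one line. Once you know that every nonsingular non-alternating symmetric matrix is congruent to the identity, write $N\oplus[1]=QQ^T$ with $Q$ invertible. The first $r$ rows of $Q$ then form a $B$ of size $r\times(r+1)$ with $BB^T=N$.

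\item \textbf{Normalization to your fixed $G$.} Add the one-line justification that it suffices to consider $[\,G\mid A\,]$. After moving the punctured coordinates to the end, an embedding $\mathcal{C}'$ has a generator matrix $[\,QG\mid A'\,]$ with $Q$ invertible. This holds because projection onto the first $n$ coordinates maps $\mathcal{C}'$ onto $\mathcal{C}$, and both have dimension $k$. Multiplying by $Q^{-1}$ gives $[\,G\mid Q^{-1}A'\,]$.
\end{itemize}
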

Let $\mathcal{C}$ be an $[n,k]$ code over $\mathbb{F}_2$ with $\dim\Hull(\mathcal{C}) = \ell$. Let $H$ be a generator matrix of $\Hull(\mathcal{C})$. Since $\Hull(\mathcal{C})$ is a subcode of $\mathcal{C}$, we can extend $H$ to a generator matrix $G(\mathcal{C})$ of $\mathcal{C}$ as follows:
\[
G(\mathcal{C}) = \begin{bmatrix}
    H \\ A
\end{bmatrix}
\]
for some $(k-\ell)\times n$ matrix $A$.
\begin{theorem}\label{zero-embedding}\cite{akklw-2025}
Let $\mathcal{C}$ be an $[n,k]$ code over $\mathbb{F}_2$ with $\ell=\dim\Hull(\mathcal{C})$. Let \[
G(\mathcal{C}) = \begin{bmatrix}
    H \\ A
\end{bmatrix}
\] where $H$ is a generator matrix of $\Hull(\mathcal{C})$. Then, a shortest self-orthogonal embedding $\mathcal{C}'$ of $\mathcal{C}$ has a generator matrix
\[
G(\mathcal{C}')=\begin{bmatrix}
    H & \mathcal{O}\\
    A & B
    \end{bmatrix}
\] for some matrix $B$.
\end{theorem}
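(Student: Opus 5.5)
The plan is to prove the stronger statement: for \emph{every} shortest self-orthogonal embedding $\mathcal{C}'$ of $\mathcal{C}$, the generator matrix obtained by extending the rows of $G(\mathcal{C})$ already has zero block under $H$. Since puncturing $\mathcal{C}'$ onto the first $n$ coordinates gives $\mathcal{C}$ and both codes have dimension $k$, each row of $G(\mathcal{C})$ lifts uniquely to $\mathcal{C}'$. So $\mathcal{C}'$ has a generator matrix of the form
\[
G(\mathcal{C}')=\begin{bmatrix} H & X\\ A & B\end{bmatrix},
\]
where $X$ is $\ell\times m$, $B$ is $(k-\ell)\times m$, and $m=n'-n$ is given by Theorem~\ref{akklw-2025}. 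It remains to show that $X=\mathcal{O}$.

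First I expand $G(\mathcal{C}')G(\mathcal{C}')^T=\mathcal{O}$ blockwise. Since the rows of $H$ lie in $\Hull(\mathcal{C})$, we have $HH^T=\mathcal{O}$ and $HA^T=\mathcal{O}$. This leaves three conditions:
\[
XX^T=\mathcal{O},\qquad XB^T=\mathcal{O},\qquad BB^T=AA^T .
\]
Next I would show that $AA^T$ is nonsingular. If $\mathbf{u}AA^T=0$, then $\mathbf{u}A$ is orthogonal to the rows of $A$. It is also orthogonal to the rows of $H$, because those lie in the hull. Hence $\mathbf{u}A\in\Hull(\mathcal{C})$, which forces $\mathbf{u}=0$ because the rows of $G(\mathcal{C})$ are independent. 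Consequently $BB^T$ is nonsingular. Let $U$ be the row space of $B$ in $\mathbb{F}_2^m$. Then $\dim U=k-\ell$, the dot product restricted to $U$ is nondegenerate, and $\mathbb{F}_2^m=U\oplus U^\perp$ with $\dim U^\perp=m-(k-\ell)$. The conditions above say that the row space $V$ of $X$ lies in $U^\perp$ and is self-orthogonal.

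Now I split into the two cases of Theorem~\ref{akklw-2025}. If $\mathcal{C}$ is odd-like, then $m=k-\ell$, so $U^\perp=0$ and hence $V=0$, i.e.\ $X=\mathcal{O}$. If $\mathcal{C}$ is even, then $m=k-\ell+1$ and $U^\perp$ is one-dimensional. Comparing diagonals in $BB^T=AA^T$ shows that every row of $B$ has even weight, because every row of $A$ does. Hence the all-ones vector $\mathbf{j}\in\mathbb{F}_2^m$ is orthogonal to $U$. Moreover, the form on $U$ is alternating ($\mathbf{x}\cdot\mathbf{x}=0$ for all $\mathbf{x}\in U$) and nondegenerate. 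Therefore $k-\ell$ is even, so $m$ is odd. Also $\mathbf{j}\notin U$, since otherwise $\mathbf{j}$ would lie in the radical of the form on $U$. It follows that $U^\perp=\langle \mathbf{j}\rangle$ with $\mathbf{j}\cdot\mathbf{j}=m\equiv 1$. A self-orthogonal subspace of $\langle\mathbf{j}\rangle$ must then be $0$, so again $V=0$ and $X=\mathcal{O}$.

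The step I expect to be the main obstacle is the even case: one must rule out that the single extra column direction is isotropic. The key facts for this are the alternating-form parity argument, which shows $k-\ell$ is even, and the identification of $U^\perp$ with $\langle\mathbf{j}\rangle$. If this more delicate argument failed, the statement would still hold in its weaker existential reading. Replacing $X$ by $\mathcal{O}$ preserves self-orthogonality, because the only remaining condition $BB^T=AA^T$ does not involve $X$. This yields a shortest self-orthogonal embedding of the required form.
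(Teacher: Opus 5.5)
The paper gives no proof of this statement; it only quotes it from \cite{akklw-2025}, so there is no in-paper argument to compare yours against.

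Your proof is correct and self-contained, given Theorem~\ref{akklw-2025}. Each step checks out: the unique lifting of the rows of $G(\mathcal{C})$, the block conditions $XX^T=\mathcal{O}$, $XB^T=\mathcal{O}$, $BB^T=AA^T$, and the nonsingularity of $AA^T$. So does the reduction to a self-orthogonal subspace $V\subseteq U^\perp$ with $\dim U^\perp=m-(k-\ell)\in\{0,1\}$. Your even-case argument via the alternating form and $\mathbf{j}$ is also valid.

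You establish the \emph{universal} form: for every shortest embedding, the lifts of the rows of $H$ have zero appended part. That is exactly what the paper relies on right after the theorem, when it says every shortest self-orthogonal embedding is obtained by varying only $B$. Your existential fallback (replacing $X$ by $\mathcal{O}$) would not justify that sentence, so the stronger argument is the one actually needed.

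The step you flagged as the main obstacle can be bypassed. The standard form on $\mathbb{F}_2^m$ and its restriction to $U$ are nondegenerate, so its restriction to $U^\perp$ is nondegenerate too: a vector of $U^\perp$ orthogonal to $U^\perp$ is orthogonal to $U+U^\perp=\mathbb{F}_2^m$. A nondegenerate space of dimension at most $1$ contains no nonzero self-orthogonal vector. This handles the odd-like and even cases uniformly, without the parity of $k-\ell$ or the identification $U^\perp=\langle\mathbf{j}\rangle$. More generally, it gives $\dim V\le\lfloor (m-k+\ell)/2\rfloor$ for any self-orthogonal embedding written in this lifted form.
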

By Theorem~\ref{zero-embedding}, every shortest self-orthogonal embedding of $\mathcal{C}$ can be obtained, up to equivalence, by varying only the block $B$ in $G(\mathcal{C}')$. Note that
\begin{align*}
    G(\mathcal{C}')G(\mathcal{C}')^T &= \begin{bmatrix}
    H & \mathcal{O}\\
    A & B
    \end{bmatrix}\begin{bmatrix}
    H^T & A^T\\
    \mathcal{O} & B^T
    \end{bmatrix}\\&=\begin{bmatrix}
    HH^T & HA^T\\
    AH^T & AA^T+BB^T
    \end{bmatrix}\\&=\begin{bmatrix}
    \mathcal{O} & \mathcal{O}\\
    \mathcal{O} & AA^T+BB^T
    \end{bmatrix}.
\end{align*}
Since $\mathcal{C}'$ is self-orthogonal, we have
\[
G(\mathcal{C}')G(\mathcal{C}')^T=\mathcal{O}.
\]
By the above computation, this condition reduces to
\[
AA^T+BB^T=\mathcal{O}.
\]
Equivalently, over $\mathbb{F}_2$, the block $B$ must satisfy
\[
BB^T=AA^T.
\]
\subsection{A constraint-preserving genetic algorithm}\label{so-embedding-algorithm}
An et al.~\cite{akklw-2025} showed that once a generator matrix of a shortest self-orthogonal embedding is obtained, all other shortest self-orthogonal embeddings can be constructed using the following theorem.
\begin{theorem}\cite{akklw-2025}\label{orthogonal-matrix}
    Let $\mathcal{C}$ be an $[n,k]$ code over $\mathbb{F}_2$ with a generator matrix \[
    G(\mathcal{C}) = \begin{bmatrix}
        H \\ A
    \end{bmatrix}
    \] where $H$ is a generator matrix of $\Hull(\mathcal{C})$.
    Let $\mathcal{C}'$ be an $[n', k]$ shortest self-orthogonal embedding of $\mathcal{C}$ with a generator matrix\[
    G(\mathcal{C}')=\begin{bmatrix}
        H & \mathcal{O}\\
        A & B
        \end{bmatrix}.
    \]
    Let $m=n'-n$. A code $\tilde{\mathcal{C}}$ is a shortest self-orthogonal embedding of $\mathcal{C}$ if and only if it has a generator matrix \[
    G(\tilde{\mathcal{C}}) = \begin{bmatrix}
    H & \mathcal{O}\\
    A & BD
    \end{bmatrix}
\] for $D \in O(m,2)$ where $O(m,2)$ is the orthogonal group of order $m$ over $\mathbb{F}_2$ given as follows:
\[
O(m,2) = \{M \in GL_m(\mathbb{F}_2)~|~MM^T=I \}.
\]
\end{theorem}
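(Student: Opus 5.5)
The plan is to prove the two directions separately. The key preliminary fact is that $AA^T$ is invertible. Indeed, $G(\mathcal{C})G(\mathcal{C})^T=\begin{bmatrix}\mathcal{O}&\mathcal{O}\\ \mathcal{O}&AA^T\end{bmatrix}$, and the rank of the Gram matrix of a generator matrix equals $k-\dim\Hull(\mathcal{C})=k-\ell$. Hence the $(k-\ell)\times(k-\ell)$ matrix $AA^T$ is nonsingular. Since $BB^T=AA^T$, the matrix $B$ has full row rank $k-\ell$, its row space is nondegenerate, and $m\ge k-\ell$. By Theorem~\ref{akklw-2025}, $m=k-\ell$ if $\mathcal{C}$ is odd-like and $m=k-\ell+1$ if $\mathcal{C}$ is even.

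\emph{Sufficiency.} Let $D\in O(m,2)$. Then $(BD)(BD)^T=BDD^TB^T=BB^T=AA^T$. By the block computation preceding the theorem, the matrix $\begin{bmatrix}H&\mathcal{O}\\ A&BD\end{bmatrix}$ therefore has zero Gram matrix. Its rows are independent because its first $n$ columns already form $G(\mathcal{C})$. Puncturing the last $m$ coordinates returns $\mathcal{C}$, and the length is $n+m$, which is minimal. So the code $\tilde{\mathcal{C}}$ it generates is a shortest self-orthogonal embedding.

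\emph{Necessity.} Let $\tilde{\mathcal{C}}$ be a shortest self-orthogonal embedding. We may take the punctured coordinates to be the last $m$. Restricting to the first $n$ coordinates gives an isomorphism $\tilde{\mathcal{C}}\to\mathcal{C}$, so we can choose the basis of $\tilde{\mathcal{C}}$ that restricts to the rows of $\begin{bmatrix}H\\A\end{bmatrix}$. By Theorem~\ref{zero-embedding} (the hull rows must receive zero extensions), $\tilde{\mathcal{C}}$ then has a generator matrix $\begin{bmatrix}H&\mathcal{O}\\ A&B'\end{bmatrix}$ with $B'B'^T=AA^T=BB^T$. It remains to find $D\in O(m,2)$ with $BD=B'$. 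We split into the two cases.

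\emph{Case $\mathcal{C}$ odd-like.} Here $B$ is square and invertible. Put $D=B^{-1}B'$. Then
\[
DD^T=B^{-1}B'B'^TB^{-T}=B^{-1}BB^TB^{-T}=I .
\]

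\emph{Case $\mathcal{C}$ even.} I expect this to be the main obstacle, since $B$ is $(m-1)\times m$ and cannot be inverted. Every codeword of $\mathcal{C}$ has even weight, so the diagonal of $AA^T=BB^T$ is zero. Hence every row of $B$, and likewise of $B'$, has even weight, i.e. is orthogonal to the all-ones vector $\mathbf{1}\in\mathbb{F}_2^m$. The vector $\mathbf{1}$ cannot lie in the row space of $B$: otherwise it would be a nonzero radical vector of that space, contradicting the invertibility of $BB^T$. The same holds for $B'$.

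Therefore $\hat B=\begin{bmatrix}B\\ \mathbf{1}\end{bmatrix}$ and $\hat B'=\begin{bmatrix}B'\\ \mathbf{1}\end{bmatrix}$ are invertible $m\times m$ matrices. Their Gram matrices agree:
\[
\hat B\hat B^T=\begin{bmatrix}AA^T&\mathbf{0}\\ \mathbf{0}&m\end{bmatrix}=\hat B'\hat B'^T .
\]
Set $D=\hat B^{-1}\hat B'$. Then $DD^T=\hat B^{-1}\hat B'\hat B'^T\hat B^{-T}=I$, and $\hat BD=\hat B'$ gives in particular $BD=B'$. This completes the necessity direction and the proof.
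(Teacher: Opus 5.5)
Your proof is correct. The paper does not prove this statement; it quotes it from \cite{akklw-2025}, so there is no in-paper argument to compare against, and what you have written is a complete, self-contained proof.

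Sufficiency is the routine identity $(BD)(BD)^T=BB^T$. The odd-like half of necessity is immediate once $B$ is square and invertible. The real content is the even case, and completing $B$ and $B'$ by the all-ones row is the right device there. It is needed because isometries between subspaces of $\mathbb{F}_2^m$ under the dot product do not extend to $O(m,2)$ in general: every $D\in O(m,2)$ satisfies $\mathbf{1}D=\mathbf{1}$. The extension of $xB\mapsto xB'$ exists here precisely because $\mathbf{1}$ spans both $\mathrm{row}(B)^\perp$ and $\mathrm{row}(B')^\perp$.

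Two small points could be tightened.

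First, the claim $\operatorname{rank}\bigl(G(\mathcal{C})G(\mathcal{C})^T\bigr)=k-\ell$ deserves its one-line reason. The left kernel of $G(\mathcal{C})G(\mathcal{C})^T$ is $\{x : xG(\mathcal{C})\in\mathcal{C}^\perp\}$, which has dimension $\dim\Hull(\mathcal{C})=\ell$.

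Second, you can dispense with Theorem~\ref{zero-embedding} and derive the zero extension of the hull rows directly. Suppose the hull rows of the generator matrix of $\tilde{\mathcal{C}}$ are $[H\;X]$. Self-orthogonality gives $XX^T=\mathcal{O}$ and $XB'^T=\mathcal{O}$, so the rows of $X$ lie in $\mathrm{row}(B')^\perp$.
\begin{itemize}
\item In the odd-like case this complement is $\{\mathbf{0}\}$, so $X=\mathcal{O}$.
\item In the even case it is $\langle\mathbf{1}\rangle$. Here $AA^T$ is a nonsingular alternating matrix, so $k-\ell$ is even and $m$ is odd. Hence $\mathbf{1}\cdot\mathbf{1}=1$, and $XX^T=\mathcal{O}$ forces $X=\mathcal{O}$.
\end{itemize}
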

For $n \geq 1$, let $\mathbf{u}$ be a vector in $\mathbb{F}_2^m$. We define the \textit{transvection} $T_\mathbf{u}:\mathbb{F}_2^m \to \mathbb{F}_2^m$ as \[
T_\mathbf{u}(\mathbf{x})=\mathbf{x}+(\mathbf{x} \cdot \mathbf{u})\mathbf{u}=\mathbf{x}(I_m+\mathbf{u}^T\mathbf{u})
\] for every $\mathbf{x} \in \mathbb{F}_2^m$. From now on, we identify $T_\mathbf{u}$ with $I_m+\mathbf{u}^T\mathbf{u}$.
\begin{remark}\label{orthogonal-group}
    It is known from \cite{Parametrization} that the orthogonal group $O(m,2)$ of order $m$ over $\mathbb{F}_2$ is given as follows:
    \begin{enumerate}
    \item if $1 \leq m \leq 3$, then $O(m,2) = \mathcal{P}_m$ where $\mathcal{P}_m$ is a permutation group of order $m$,
    \item if $m \geq 4$, then $O(m,2)=\langle\mathcal{P}_m, T_{\mathbf{u}}\rangle$
    where $\mathbf{u}$ is a vector of Hamming weight 4 over $\mathbb{F}_2$.
    \end{enumerate}
\end{remark}
\begin{theorem}\label{orthogonal-group-generator}
    For $m \geq 5$, $O(m,2)$ over $\mathbb{F}_2$ is generated by \[
    \mathcal{T} = \{T_\mathbf{u}~|~\wt(\mathbf{u})=4~for~\mathbf{u}\in\mathbb{F}_2^m\}.
    \]
\end{theorem}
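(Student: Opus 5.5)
By Remark~\ref{orthogonal-group}, for $m \geq 5$ (indeed $m\ge 4$) we have $O(m,2)=\langle \mathcal{P}_m, T_{\mathbf{u}}\rangle$ with $\mathbf{u}$ a fixed weight-$4$ vector. So it suffices to prove two things: $\langle\mathcal{T}\rangle \subseteq O(m,2)$, and $\mathcal{P}_m \subseteq \langle \mathcal{T}\rangle$. Since $T_{\mathbf{u}}\in\mathcal{T}$ itself, these give $O(m,2)=\langle\mathcal{T}\rangle$.

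\textbf{Inclusion $\langle\mathcal{T}\rangle \subseteq O(m,2)$.} For $\wt(\mathbf{u})=4$ we have $\mathbf{u}\mathbf{u}^T = \wt(\mathbf{u}) \bmod 2 = 0$. Using the symmetry of $I_m+\mathbf{u}^T\mathbf{u}$, we compute
\[
T_{\mathbf{u}}T_{\mathbf{u}}^T=(I_m+\mathbf{u}^T\mathbf{u})^2 = I_m + 2\mathbf{u}^T\mathbf{u} + \mathbf{u}^T(\mathbf{u}\mathbf{u}^T)\mathbf{u} = I_m .
\]
Thus each $T_{\mathbf{u}}$ is an orthogonal involution.

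\textbf{Inclusion $\mathcal{P}_m \subseteq \langle\mathcal{T}\rangle$.} Since $\mathcal{P}_m$ is generated by transpositions, it is enough to express each transposition $(i\,j)$ as a product of weight-$4$ transvections. The plan is to work inside the coordinates $\{i,j,a,b,c\}$, which requires $m\ge5$; this is exactly where the hypothesis enters, since for $m=4$ there is only one weight-$4$ vector.

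Let $\mathbf{u}=e_i+e_j+e_a+e_b$ and $\mathbf{v}=e_i+e_j+e_a+e_c$. The action of $T_{\mathbf{u}}$ is $x\mapsto x+(x\cdot\mathbf{u})\mathbf{u}$, so it sends $e_i \mapsto e_j+e_a+e_b$, and similarly for the other coordinates in the support of $\mathbf{u}$. I would compute the action on basis vectors of short words in $T_{\mathbf{u}}$, $T_{\mathbf{v}}$, and also $T_{\mathbf{w}}$ with $\mathbf{w}=e_i+e_j+e_b+e_c$, looking for a product that swaps $e_i,e_j$ and fixes $e_a,e_b,e_c$.

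I expect $T_{\mathbf{u}}T_{\mathbf{v}}T_{\mathbf{u}}=T_{T_{\mathbf{u}}(\mathbf{v})}$ to be useful here. The vector $T_{\mathbf{u}}(\mathbf{v})=\mathbf{v}+(\mathbf{v}\cdot\mathbf{u})\mathbf{u}$ has $\mathbf{v}\cdot\mathbf{u}=3\equiv1$, so $T_{\mathbf{u}}(\mathbf{v})=\mathbf{u}+\mathbf{v}=e_b+e_c$, a weight-$2$ vector. The transvection $T_{e_b+e_c}=I+(e_b+e_c)^T(e_b+e_c)$ is precisely the transposition $(b\,c)$, because $x\mapsto x+(x_b+x_c)(e_b+e_c)$ swaps $x_b$ and $x_c$. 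Hence $(b\,c)=T_{\mathbf{u}}T_{\mathbf{v}}T_{\mathbf{u}}\in\langle\mathcal{T}\rangle$.

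Since $b,c$ were arbitrary distinct coordinates and we can choose three further coordinates $i,j,a$ (available because $m\ge5$), every transposition lies in $\langle\mathcal{T}\rangle$. Therefore $\mathcal{P}_m\subseteq\langle\mathcal{T}\rangle$, and combined with Remark~\ref{orthogonal-group} this gives the theorem.

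The main obstacle is verifying the conjugation identity $T_{\mathbf{u}}T_{\mathbf{v}}T_{\mathbf{u}}^{-1}=T_{\mathbf{v}T_{\mathbf{u}}}$ in the row-vector convention. For $D$ orthogonal, $D^{-1}(I+\mathbf{v}^T\mathbf{v})D = I+(\mathbf{v}D)^T(\mathbf{v}D)$, because $D^{-1}=D^T$. Since $T_{\mathbf{u}}^{-1}=T_{\mathbf{u}}$, taking $D=T_{\mathbf{u}}$ gives the identity, and this is a one-line check.
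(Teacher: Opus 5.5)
Your proof is correct and follows essentially the same route as the paper: you reduce via Remark~\ref{orthogonal-group} to showing every transposition lies in $\langle\mathcal{T}\rangle$, and you obtain it as $T_{\mathbf{u}}T_{\mathbf{v}}T_{\mathbf{u}}=T_{\mathbf{u}+\mathbf{v}}$ for two weight-$4$ vectors sharing three coordinates, so that $\mathbf{u}+\mathbf{v}$ has weight $2$ and $T_{\mathbf{u}+\mathbf{v}}$ is a transposition. The differences are cosmetic. You verify the identity through the conjugation formula $D^{-1}T_{\mathbf{v}}D=T_{\mathbf{v}D}$ rather than by direct expansion, and you check $T_{\mathbf{u}}T_{\mathbf{u}}^T=I_m$ explicitly; your bookkeeping $\mathbf{u}+\mathbf{v}=\mathbf{e}_b+\mathbf{e}_c$ is also right, whereas the paper's $p+q$ is actually $\mathbf{e}_a+\mathbf{e}_c$, not the claimed $\mathbf{e}_a+\mathbf{e}_b$.
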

\begin{proof}
    For any $\mathbf{u}_0 \in \mathbb{F}_2^m$, by Remark~\ref{orthogonal-group}, $O(m,2)=\langle\mathcal{P}_m, T_{\mathbf{u}_0}\rangle$. Thus, it is straightforward that $T_{\mathbf{u}_0} \in O(m,2)$.
    Next, we show that $\mathcal{T}$ generates $O(m,2)$, i.e., $\mathcal{P}_m \subseteq \langle\mathcal{T}\rangle$. Since $\mathcal{P}_m$ is generated by transpositions, it is sufficient to prove that any transposition $(ab)$ is generated by transvections in $\mathcal{T}$. We will show that any transposition can be written as a product of three transvections. Choose three distinct indices $c, d, e \notin \{a, b\}$, and set\[
    p=\mathbf{e}_a + \mathbf{e}_b + \mathbf{e}_d + \mathbf{e}_e,~~~~q=\mathbf{e}_b + \mathbf{e}_c + \mathbf{e}_d + \mathbf{e}_e.
    \] where $\mathbf{e}_i$ denotes the $i$-th standard basis vector of $\mathbb{F}_2^m$, that is, the vector whose $i$-th coordinate is $1$ and all other coordinates are $0$.
    Then $\wt(p)=\wt(q)=4$ and $p \cdot q = 1$. For $\mathbf{x} \in \mathbb{F}_2^m$, we have
\begin{align*}
    T_p T_q T_p(\mathbf{x}) &= T_p T_q(\mathbf{x} + (\mathbf{x} \cdot p)p)\\
    &= T_p (\mathbf{x} + (\mathbf{x} \cdot p)p + (\mathbf{x} \cdot q + \mathbf{x} \cdot p)q)\\
    &= \mathbf{x} + (\mathbf{x} \cdot p)p + (\mathbf{x} \cdot q + \mathbf{x} \cdot p)q + (\mathbf{x} \cdot q)p\\
    &= \mathbf{x} + (\mathbf{x} \cdot (p+q))(p+q)\\
    &= T_{p+q}(\mathbf{x})\\
    &= T_{\mathbf{e}_a+\mathbf{e}_b}(\mathbf{x}).
\end{align*}
For $\mathbf{x} \in \mathbb{F}_2^m$,
\begin{align*}
    T_{\mathbf{e}_a+\mathbf{e}_b}(\mathbf{x})&=\mathbf{x} + (\mathbf{x} \cdot (\mathbf{e}_a+\mathbf{e}_b))(\mathbf{e}_a+\mathbf{e}_b)\\
    &=\mathbf{x} + (x_a+x_b)(\mathbf{e}_a+\mathbf{e}_b)\\
    &=\mathbf{x} + (x_a+x_b)\mathbf{e}_a + (x_a+x_b)\mathbf{e}_b\\
\end{align*}
Hence, we conclude that $T_{\mathbf{e}_a+\mathbf{e}_b}$ interchanges the $a$-th and $b$-th coordinates of $\mathbf{x}$ and fixes all other coordinates.
Thus, $T_p T_q T_p = T_{\mathbf{e}_a+\mathbf{e}_b} = (ab)$.
\end{proof}
\begin{remark}[\cite{m-1969}]\label{orthogonal-group-order}
The exact order of the orthogonal group $O(m,2)$ is known. Let $t=\lfloor m/2\rfloor$. Then
\[
|O(m,2)|=
\begin{cases}
\displaystyle 2^{t}\prod_{i=1}^{t-1}\bigl(2^{2t}-2^{2i}\bigr), & \text{if } m=2t,\\[3mm]
\displaystyle 2^{t}\prod_{i=0}^{t-1}\bigl(2^{2t}-2^{2i}\bigr), & \text{if } m=2t+1.
\end{cases}
\]
In particular, $|O(2t+1,2)|=(2^{2t}-1)\,|O(2t,2)|$. Table~\ref{tab:orders-orthogonal-groups} lists these orders for small values of $m$.
\end{remark}
\begin{table}[h]
\centering
\begin{tabular}{c|c}
\hline
$m$ & $|O(m, 2)|$ \\
\hline
$4$ & $2\cdot 4!$ \\
$5$ & $6\cdot 5!$ \\
$6$ & $2^5\cdot 6!$ \\
$7$ & $9\cdot 2^5\cdot 7!$ \\
$8$ & $9\cdot 2^9\cdot 8!$ \\
$9$ & $255\cdot 2^9\cdot 9!$ \\
\hline
\end{tabular}
\caption{Orders of $O(m, 2)$ for small values of $m$.}
\label{tab:orders-orthogonal-groups}
\end{table}
Let $\mathcal{C}$ be an $[n,k]$ code over $\mathbb{F}_2$. Our goal is to find a self-orthogonal embedding of $\mathcal{C}$ with the largest possible minimum distance. Let $\mathcal{C}'$ be an $[n+m,k]$ self-orthogonal embedding of $\mathcal{C}$ over $\mathbb{F}_2$. Suppose that
\[
G(\mathcal{C})=
\begin{bmatrix}
H\\
A
\end{bmatrix}
\qquad \text{and} \qquad
G(\mathcal{C}')=
\begin{bmatrix}
H & \mathcal{O}\\
A & B_0
\end{bmatrix},
\]
where $B_0$ is a matrix satisfying
\[
AA^T=B_0B_0^T.
\]
In order to find all shortest self-orthogonal embeddings of $\mathcal{C}$, we need to compute the minimum distance of the code generated by
\[
\begin{bmatrix}
H & \mathcal{O}\\
A & B_0D
\end{bmatrix}
\]
for every matrix $D\in O(m,2)$. However, as shown in Table~\ref{tab:orders-orthogonal-groups} and Remark~\ref{orthogonal-group-order}, an exhaustive search over all matrices in the orthogonal group is computationally infeasible.
Here, we apply a genetic algorithm to search for self-orthogonal embeddings, based on Theorem~\ref{orthogonal-group-generator}. Assume that the number of appended columns is $m\geq 5$. Let
\[
U=\{\mathbf{u}_1,\mathbf{u}_2,\ldots,\mathbf{u}_r\}
\]
be the set of all weight-$4$ vectors in $\mathbb{F}_2^m$.
We define a chromosome to be a finite sequence of vectors from $U$. More precisely, for a positive integer $s$, a chromosome is a sequence
\[
\chi=(\mathbf{u}_{i_1},\mathbf{u}_{i_2},\ldots,\mathbf{u}_{i_s}),
\]
where $\mathbf{u}_{i_j}\in U$ for each $j=1,\ldots,s$. We define the length of $\chi$ to be the number of transvections represented by the sequence, namely
\[
|\chi|=s.
\]
For each chromosome $\chi$, we associate the matrix
\[
D_{\chi}
=
T_{\mathbf{u}_{i_1}}
T_{\mathbf{u}_{i_2}}
\cdots
T_{\mathbf{u}_{i_s}}.
\]
By Theorem~\ref{orthogonal-group-generator}, $D_{\chi}\in O(m,2)$.
Before describing the genetic operations, we define the fitness function used to evaluate a chromosome. Let $\chi$ be a chromosome and let $D_{\chi}$ be the corresponding orthogonal matrix. We denote by $\mathcal{C}_{\chi}$ the code generated by
\[
G_{\chi}=
\begin{bmatrix}
H & \mathcal{O}\\
A & B_0D_{\chi}
\end{bmatrix}.
\]
The primary objective is to maximize the minimum distance of $\mathcal{C}_{\chi}$. However, many chromosomes may yield codes with the same minimum distance. To distinguish such chromosomes, we use a \textit{lexicographic fitness function}. Let
\[
d_{\chi}=d(\mathcal{C}_{\chi})
\]
be the minimum distance of $\mathcal{C}_{\chi}$, and let
\[
A_i(\mathcal{C}_{\chi})
=
\left|\left\{
\mathbf{c}\in \mathcal{C}_{\chi}:\operatorname{wt}(\mathbf{c})=i\right\}\right|
\]
be the number of codewords of weight $i$ in $\mathcal{C}_{\chi}$. We define the \textit{fitness} of $\chi$ by
\[
F(\chi)
=
\left(
d_{\chi},
-A_{d_{\chi}}(\mathcal{C}_{\chi}),
-|\chi|
\right),
\]
where $|\chi|$ denotes the length of the chromosome. Fitness values are compared lexicographically. Thus, a chromosome with larger minimum distance is preferred. If two chromosomes have the same minimum distance, then the one with fewer minimum-weight codewords is preferred. The last component is used only as a tie-breaker to discourage unnecessarily long representations.
Even when two chromosomes yield codes with the same minimum distance, the lexicographic fitness function allows us to select a more promising chromosome among candidates.
In this algorithm, there are two types of crossovers, random crossover and guided crossover.
Let
\[ \chi_1=(\mathbf{u}_{i_1}, \mathbf{u}_{i_2}, \ldots, \mathbf{u}_{i_s}) \qquad \text{and} \qquad \chi_2=(\mathbf{u}_{j_1}, \mathbf{u}_{j_2}, \ldots, \mathbf{u}_{j_{s'}}).\]
First, we introduce the random crossover operation. Crossover points $a$ and $b$ are selected randomly within $1 \leq a \leq s$ and $1 \leq b \leq s'$. Then the output chromosome is \[
\chi_3 = (\mathbf{u}_{i_1}, \ldots, \mathbf{u}_{i_a},\mathbf{u}_{j_{b+1}}, \ldots, \mathbf{u}_{j_{s'}}),
\]
and
\[
\chi_4 = (\mathbf{u}_{j_1}, \ldots, \mathbf{u}_{j_b},\mathbf{u}_{i_{a+1}}, \ldots, \mathbf{u}_{i_{s}}).
\]
Next, we introduce the guided crossover operation. We regard $\chi_1$ as the base chromosome and $\chi_2$ as the secondary chromosome. Let $\chi_1=(\mathbf{u}_{i_1},\mathbf{u}_{i_2},\ldots,\mathbf{u}_{i_s})$ be the base chromosome. For each $1\leq r\leq s$, define
\[
{\chi_1}_r=(\mathbf{u}_{i_1},\ldots,\mathbf{u}_{i_r}).
\]
For a positive integer $p\leq s$, we select the $p$ \textit{prefix chromosomes} among $\chi_{1_1},\chi_{1_2},\ldots,\chi_{1_s}$ with the largest fitness values. Similarly, let $\chi_2=(\mathbf{u}_{j_1},\mathbf{u}_{j_2},\ldots,\mathbf{u}_{j_{s'}})$ be the secondary chromosome. For each $1\leq r'\leq s'$, define
\[
{\chi_2}^{r'}=(\mathbf{u}_{j_{r'}},\ldots,\mathbf{u}_{j_{s'}}).
\]
For a positive integer $q\leq s'$, we select the $q$ \textit{suffix chromosomes} among ${\chi_2}^1,{\chi_2}^2,\ldots,{\chi_2}^{s'}$ with the largest fitness values. We then concatenate each selected chromosome from the base chromosome with each selected chromosome from the secondary chromosome. In this way, we obtain $pq$ candidate offspring chromosomes. Among these candidates, we choose two chromosomes $\chi_3$ and $\chi_4$ with the largest fitness values and use them as offspring.
To prevent excessive chromosome length, we constrain the maximum length of the resulting chromosomes for both crossover operators. The maximum allowable length of a chromosome is set as a hyperparameter, denoted by $L_{\max}$. In our experiment, we set $L_{\max}=4m$. If $|\chi_3|>L_{\max}$ or $|\chi_4|>L_{\max}$, then the crossover points are chosen again.
There are several types of mutation operations. Let
\[
\chi=(\mathbf{u}_{i_1},\mathbf{u}_{i_2},\ldots,\mathbf{u}_{i_s})
\]
be a chromosome. We first introduce four mutation operations.
\begin{enumerate}
    \item \textbf{Replace.} Replace a randomly selected entry $\mathbf{u}_{i_\ell}$ of $\chi$ with a new vector $\mathbf{u}_{i_t}\in U$.

    \item \textbf{Insert.} Insert a new vector $\mathbf{u}_{i_t}\in U$ at a randomly selected position in $\chi$.

    \item \textbf{Delete.} Delete a randomly selected entry $\mathbf{u}_{i_\ell}$ from $\chi$.

    \item \textbf{Swap.} Interchange two randomly selected entries $\mathbf{u}_{i_{\ell_1}}$ and $\mathbf{u}_{i_{\ell_2}}$ of $\chi$. Since the orthogonal group $O(m,2)$ is generally non-abelian, changing the order of transvections may produce a different corresponding orthogonal matrix.
\end{enumerate}
After crossover and mutation, each offspring is refined by a bounded local search: for up to $D$ trials, a random weight-four transvection $\mathbf{u}$ is appended to the chromosome and kept only if it strictly improves the fitness, stopping early once the length bound $L_{\max}=4m$ is reached. This refinement is applied to every offspring but not to the elite chromosomes. In our experiments we set $D=8$.
\subsection{Complexity}\label{sec:complexity}
We analyze the cost of evaluating the fitness of a candidate embedding, which is the dominant operation of the algorithm. Throughout this subsection we write $B$ for the appended block $B_0D_{\chi}$ of a candidate, since from a computational standpoint only the resulting block matters.

Let $\mathcal{C}$ be the base $[n,k,d]$ code to be embedded, and let $\ell=\dim\Hull(\mathcal{C})$ and $r=k-\ell$. Write a generator matrix of $\mathcal{C}$ as
\[
G(\mathcal{C})=\begin{bmatrix} H \\ A \end{bmatrix},
\]
where the rows of $H$ form a basis of $\Hull(\mathcal{C})$ and the rows $a_1,\ldots,a_r$ of $A$ complete it to a basis of $\mathcal{C}$. Then every codeword $c\in\mathcal{C}$ is written uniquely as
\[
c=h+a_x,\qquad h\in\Hull(\mathcal{C}),\quad a_x:=x_1a_1+\cdots+x_ra_r,
\]
for some $x=(x_1,\ldots,x_r)\in\mathbb{F}_2^r$. For a fixed $x$, the codewords $a_x+h$ with $h\in\Hull(\mathcal{C})$ form the coset $a_x+\Hull(\mathcal{C})$.

For a candidate embedding with appended block $B$, the generator matrix has the form $\begin{bmatrix} H & \mathcal{O} \\ A & B \end{bmatrix}$, so the appended coordinates of the codeword $h+a_x$ are $b_x:=x_1b_1+\cdots+x_rb_r$, where $b_i$ is the $i$-th row of $B$. Crucially, $b_x$ depends only on $x$ and not on $h$: within a coset the appended part is constant. Since the original and appended coordinates are disjoint, the weight splits as
\[
\operatorname{wt}(h+a_x,\,b_x)=\operatorname{wt}(h+a_x)+\operatorname{wt}(b_x),
\]
and the minimum weight over the coset is therefore
\[
\min_{h\in\Hull(\mathcal{C})}\operatorname{wt}(h+a_x,\,b_x)
=\Big(\min_{h\in\Hull(\mathcal{C})}\operatorname{wt}(h+a_x)\Big)+\operatorname{wt}(b_x).
\]

The first term does not depend on the candidate $B$. Hence, once per base code, we precompute the values
\[
\mu_x=\min_{h\in\Hull(\mathcal{C})}\operatorname{wt}(a_x+h)
\qquad (x\in\mathbb{F}_2^r).
\]
This is implemented by enumerating all $2^{\ell}$ hull words for each of the $2^{r}$ cosets, and therefore costs $O(2^{\ell}2^{r})=O(2^{k})$ time. Given any candidate $B$, its minimum distance is then obtained as
\[
\min_{x\in\mathbb{F}_2^r}\big(\mu_x+\operatorname{wt}(b_x)\big),
\]
which scans only the $2^{r}$ cosets rather than all $2^{k}$ codewords, i.e.\ in $O(2^{r})$ time.

Consequently, let $P$, $G$, $D$, and $L$ denote the population size, number of generations, local-search depth, and maximum chromosome length, respectively. In each generation, the algorithm evaluates the fitness of $O(P)$ candidates, and each candidate undergoes up to $D$ local-search steps that each trigger a fitness evaluation; over $G$ generations this amounts to
\[
E=O(GPD)
\]
fitness evaluations in total. Each evaluation costs $O(2^{r})$ by the coset scan above, and the one-time precomputation costs $O(2^{k})$, so the total running time is
\[
O\!\big(2^{k}+E\,2^{r}\big)=O\!\big(2^{k}+GPD\,2^{r}\big).
\]
For the storage, the precomputed table $\{\mu_x\}$ requires $O(2^{r})$ space, and maintaining a population of $P$ chromosomes, each a gene sequence of length at most $L$, requires $O(PL)$ space; hence the total storage is $O\!\big(2^{r}+PL\big)$. In contrast, an exhaustive search over the orthogonal group would evaluate all $|O(m,2)|=2^{\Theta(m^{2})}$ matrices. Since $m=k-\ell$ or $k-\ell+1$, we have $m\le k$, so the exhaustive cost $2^{\Theta(m^{2})}$ grows with the \emph{square} of the exponent, whereas the cost of our algorithm is governed by $2^{k}$, whose exponent is only \emph{linear} in $k$. Although both are exponential in the code dimension, reducing the exponent from order $m^{2}$ to order $k$ is a substantial saving: for instance, when $\ell$ is small so that $m\approx k$, the exhaustive search scales as $2^{\Theta(k^{2})}$ while our algorithm scales as $2^{\Theta(k)}$. The genetic algorithm thus replaces the search over a group of size $2^{\Theta{(m^2)}}$ by a fixed number of evaluations, each costing only $O(2^{r})$.
\section{Numerical results}
\subsection{Bounds}\label{experiments-bound}
We first construct bounds on the minimum distances of self-orthogonal codes over $\mathbb{F}_2$ with dimension $9 \leq k \leq 15$. Grassl's table~\cite{grassl} provides bounds for $d(n,k)$. Clearly, this gives an upper bound for $d_{SO}(n,k)$. Shi et al. gave a bound for self-orthogonal codes using residual codes.
\begin{lemma}\cite{bound}
    Let $\mathcal{C}$ be a self-orthogonal $[n,k,d]$ code over $\mathbb{F}_2$ and let $\mathbf{c}$ be a codeword of weight $w < 2d$, where $k \geq 2$. Then the residual code $\mathrm{Res}(\mathcal{C},\mathbf{c})$ of $\mathcal{C}$ is an even code with parameters
    \[
    \left[n-w, k-1, d' \geq 2 \left\lceil\frac{d}{2}-\frac{w}{4}\right\rceil\right].
    \] In particular, if $\mathbf{c}$ has weight $d$, where $d \equiv 2 \pmod4$, then $\mathrm{Res}(\mathcal{C},\mathbf{c})$ has parameters \[
    \left[ n-d, k-1, d' \geq \frac{d}{2}+1 \right].
    \]
\end{lemma}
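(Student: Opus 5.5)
The plan is to follow the classical residual-code argument, but sharpened using the weight identity from the preliminaries, $\wt(\mathbf{x}+\mathbf{y})=\wt(\mathbf{x})+\wt(\mathbf{y})-2(\mathbf{x}*\mathbf{y})$, together with self-orthogonality. Let $S$ be the support of $\mathbf{c}$, so $|S|=w$, and let $\bar S$ be its complement.

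\textbf{Every codeword has even weight.} Since $\mathcal{C}\subseteq\mathcal{C}^\perp$, each codeword satisfies $\mathbf{x}\cdot\mathbf{x}=\wt(\mathbf{x})\equiv 0 \pmod 2$. So every codeword, including $\mathbf{c}$, has even weight. Moreover, $\mathbf{x}\cdot\mathbf{c}=0$ gives that $\mathbf{x}*\mathbf{c}$ is even for every $\mathbf{x}\in\mathcal{C}$.

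\textbf{The residual code is even.} For $\mathbf{x}\in\mathcal{C}$, write $\mathbf{x}'$ for its restriction to $\bar S$. Then
\[
\wt(\mathbf{x}')=\wt(\mathbf{x})-\mathbf{x}*\mathbf{c}.
\]
Both terms on the right are even, so $\mathrm{Res}(\mathcal{C},\mathbf{c})$ is an even code.

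\textbf{Dimension $k-1$.} The puncturing map $\mathcal{C}\to\mathbb{F}_2^{n-w}$ is linear, so it suffices to show its kernel is $\{\mathbf{0},\mathbf{c}\}$. Suppose $\mathbf{x}\neq\mathbf{0},\mathbf{c}$ is supported inside $S$. Then $\mathbf{x}+\mathbf{c}$ is also a nonzero codeword supported in $S$, and the two have weights summing to $w<2d$. Hence one of them has weight less than $d$, which is a contradiction. So the kernel is exactly $\{\mathbf{0},\mathbf{c}\}$ and the residual code has dimension $k-1$.

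\textbf{Distance bound.} Take $\mathbf{x}\in\mathcal{C}$ with $\mathbf{x}'\neq\mathbf{0}$, and set $t=\mathbf{x}*\mathbf{c}$. Both $\mathbf{x}$ and $\mathbf{x}+\mathbf{c}$ are nonzero codewords, so
\[
\wt(\mathbf{x}')+t\ge d \quad\text{and}\quad \wt(\mathbf{x}')+(w-t)\ge d.
\]
Adding these gives $2\wt(\mathbf{x}')\ge 2d-w$, that is, $\wt(\mathbf{x}')\ge d-w/2$.

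This is the main step needing care: we must convert the bound into one of the form $2\lceil\cdot\rceil$. Because $\wt(\mathbf{x}')$ is even, we can write $\wt(\mathbf{x}')=2j$ with $j\ge d/2-w/4$. Since $j$ is an integer, $j\ge\lceil d/2-w/4\rceil$, which gives $d'\ge 2\lceil d/2-w/4\rceil$. When $w<2d$ this bound is positive, which is consistent with the kernel computation above.

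\textbf{The special case.} If $w=d$ with $d\equiv 2\pmod 4$, then $w<2d$ holds. Here $d/2-d/4=d/4$, and since $d=4s+2$ for some integer $s$, we get $\lceil d/4\rceil=s+1$. Therefore $2\lceil d/4\rceil=2s+2=d/2+1$, which is the stated bound.
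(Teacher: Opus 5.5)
Your proof is correct and complete. The parity facts $\wt(\mathbf{x})\equiv \mathbf{x}*\mathbf{c}\equiv 0 \pmod 2$ give evenness. The kernel argument gives dimension $k-1$; it implicitly uses $\mathbf{c}\neq\mathbf{0}$, which the statement needs anyway. Comparing $\mathbf{x}$ with $\mathbf{x}+\mathbf{c}$ and then rounding by parity gives $d'\geq 2\lceil d/2-w/4\rceil$.

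The paper gives no proof of its own and only quotes the lemma from Shi et al. Your argument is the standard residual-code bound, as in the classical theorem recalled in the preliminaries, sharpened by exactly these self-orthogonality parity constraints.
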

Given $n$ and $k \geq 2$, let $d=d(n,k)$. Assume that there is a self-orthogonal $[n,k,d]$ code over $\mathbb{F}_2$. For a codeword $\mathbf{c} \in \mathcal{C}$ of weight $d$, we consider the residual code $\mathrm{Res}(\mathcal{C},\mathbf{c})$ with parameters \[
\left[n-d, k-1, d' \geq 2 \left\lceil\frac{d}{4}\right\rceil\right].
\]
If no even code with these residual parameters exists, then $d_{SO}(n,k)$ cannot be equal to $d(n,k)$. Thus, \[
d_{SO}(n,k) \leq d(n,k)-2.
\]
Recently, Li et al.~\cite{bound2} showed that self-orthogonal codes with certain parameters do not exist.
\begin{enumerate}
    \item There are no binary self-orthogonal $[2^k-3, k, 2^{k-1}-2]$ and $[2^{k-1}-2, k, 2^{k-2}-2]$ codes for any $k \geq 4$.
    \item There are no binary self-orthogonal $[2^k-2^u-2, k, 2^{k-1}-2^{u-1}-2]$ codes for any $k \geq u+2 \geq 5$.
    \item There are no binary self-orthogonal $[2^k-1, k+1, 2^{k-1}-2]$ codes for $k \geq 5$.
\end{enumerate}
Combining Grassl's table~\cite{grassl} with the residual-code criterion of Shi et al.~\cite{bound} and the nonexistence results of Li et al.~\cite{bound2}, we derive upper bounds on $d_{SO}(n,k)$ for binary self-orthogonal $[n,k]$ codes in the range $20\leq n\leq 256$ and $9\leq k\leq 15$, which is presented in Table~\ref{tab:full-so-upper-bounds}. We denote by $\overline{d}_{SO}(n,k)$ the resulting upper bound on $d_{SO}(n,k)$.
For $k=9$ and $k=10$, the values of $d_{SO}(n,k)$ are already known for $20\leq n\leq 40$ in~\cite{optimal-so-bound}. Therefore, in these cases, we start from $n=41$.
\subsection{Experiments}
In this section, we describe our general methodology. Given a generator matrix $G$ of a binary linear code $\mathcal{C}$, our goal is to construct a shortest self-orthogonal embedding of $\mathcal{C}$ whose minimum distance is the largest among the embeddings found by our methodology.
\begin{enumerate}
    \item We aim to obtain self-orthogonal codes with large minimum distance of parameters in the range
    \[
    20\leq n\leq 256
    \qquad\text{and}\qquad
    9\leq k\leq 15.
    \]
    We use base codes from the MAGMA BKLC (Best Known Linear Codes) library and apply shortest self-orthogonal embedding methods.
    \item Let $\mathcal{C}$ be an $[n,k,d]$ code over $\mathbb{F}_2$ with generator matrix $G$, obtained from the MAGMA BKLC database, and assume that $\mathcal{C}$ is not self-orthogonal. Let $\ell=\dim\Hull(\mathcal{C})$. By Theorem~\ref{akklw-2025}, the number $m$ of appended columns in a shortest self-orthogonal embedding is $k-\ell$ if $\mathcal{C}$ is an odd-like code and $k-\ell+1$ if $\mathcal{C}$ is an even code.
    \item In our genetic algorithm, we use $\mathcal{T} = \{T_\mathbf{u}~|~\wt(\mathbf{u})=4~for~\mathbf{u}\in\mathbb{F}_2^m\}$ as genes. However, by Theorem~\ref{orthogonal-group-generator}, the orthogonal group of order $m$ is generated by $\mathcal{T}$ only for $m \geq 5$. Since $O(m,2)$ is equal to the permutation group $\mathcal{P}_m$ of order $m$ if $m\leq3$ and $O(4,2)$ is not solely generated by $\mathcal{T}$, GA is not feasible when $m \leq 4$. Hence, we implement GA for $m \geq 5$.
    \item \textbf{Genetic Algorithm. }The initial feasible matrix $B_0$ is computed using random search. Once the initial matrix $B_0$ is found, the genetic algorithm is run for at most $120$ generations, and the initial random chromosome length is bounded above by $3m$.
    \begin{enumerate}
        \item Initialization : The initial population consists of $120$ randomly generated chromosomes. Within the upper bound of chromosome length at most $3m$, the length of a chromosome is randomly selected. Genes are selected from $\mathcal{T}$ for predetermined chromosome length.

        \item Selection : Parent selection is performed by tournament selection with tournament size $5$. The top $2$ chromosomes are carried over from the previous generation as elites and $59$ pairs of parents are selected by tournament selection. In the selection phase, we use the fitness function $F(\chi)=\left(d_{\chi},-A_{d_{\chi}}(\mathcal{C}_{\chi}),-|\chi|\right)$ to evaluate each chromosome.

        \item Crossover : The crossover probability is set to $0.8$. When crossover is applied, random crossover is chosen with probability $0.45$, and guided crossover is chosen with probability $0.55$. Regarding guided crossover, we select $4$ prefix chromosomes from the base chromosome and $3$ suffix chromosomes from the secondary chromosome and choose $2$ offspring chromosomes which satisfy the length constraint $|\chi|\leq 4m$ and have the largest fitness values.

        \item Mutation : The mutation probability is set to $0.45$. When mutation is applied, the relative weights of the mutation operations are given by $0.29$ for replace, $0.29$ for insert, $0.21$ for delete, $0.21$ for swap. Moreover, the maximum allowable chromosome length is bounded above by $4m$.

        \item Local search : each offspring is refined by a local search of depth $8$ (Subsection~\ref{so-embedding-algorithm}).
    \end{enumerate}
    Finally, if the best fitness value does not improve for $20$ consecutive generations, a restart is performed by replacing $50\%$ of the population with newly generated chromosomes.
\begin{algorithm}[t]
\caption{Genetic Algorithm for Shortest Self-Orthogonal Embedding}
\label{alg:ga-so}
\begin{algorithmic}[1]
\Require A BKLC code $\mathcal{C}$ requiring $m\ (\ge 5)$ appended columns
\Ensure A self-orthogonal embedding of $\mathcal{C}$ with large minimum distance
\State Split the generator matrix of $\mathcal{C}$ into its hull part and complementary part $A$
\State Find an initial feasible matrix $B_0$ satisfying $AA^\top+B_0B_0^\top=\mathcal{O}$ by random search
\State Set the gene pool $\mathcal{T}=\{\,T_\mathbf{u}: \wt(\mathbf{u})=4,\ \mathbf{u}\in\mathbb{F}_2^m\,\}$
    \Statex \Comment{each chromosome is a gene sequence and yields an embedding by applying its genes to $B_0$}
\State Initialize $120$ chromosomes with random genes from $\mathcal{T}$, each of length at most $3m$
\For{at most $120$ generations}
    \State Carry over the best $2$ chromosomes as elites
    \For{$59$ parent pairs selected by tournament selection (size $5$)}
        \State With probability $0.8$ apply crossover (guided $0.55$, random $0.45$); otherwise keep the parents
        \State With probability $0.45$ mutate each offspring (replace $0.29$, insert $0.29$, delete $0.21$, swap $0.21$)
        \State Apply local search of depth $D$ to each offspring (append a random weight-$4$ transvection, keep on strict improvement)
        \State Add the two offspring, respecting the length bound $4m$
    \EndFor
    \State Rank chromosomes by the fitness function $F(\chi)=(d_{\chi},-A_{d_{\chi}}(\mathcal{C}_{\chi}),-|\chi|)$ in lexicographical order
    \If{the best value has not improved for $20$ generations}
        \State Replace $50\%$ of the population with new random chromosomes
    \EndIf
\EndFor
\State \Return the best embedding found
\end{algorithmic}
\end{algorithm}
\end{enumerate}
\subsection{Results}
We conducted experiments on a total of $469$ base codes with $m\geq 5$ obtained from the MAGMA BKLC database, where $m$ denotes the number of columns appended by a shortest self-orthogonal embedding. For each base code, our genetic algorithm searches for an embedding that maximizes the minimum distance of the resulting self-orthogonal code. In Tables~\ref{tab:so_embedding_gap0} and~\ref{tab:so_embedding_gap2}, $n$ denotes the length of the code obtained after applying a shortest self-orthogonal embedding, so that $[n,k]$ are the parameters of the embedded code with $9\leq k\leq 15$, and $d_{SO}^{GA}(n,k)$ denotes the largest minimum distance obtained by our method for these parameters. The upper bound on $d_{SO}(n,k)$ used for comparison is the one derived in Subsection~\ref{experiments-bound}.
We define the \emph{gap} of $(n,k)$ by
\[
\mathrm{gap}(n,k)=\overline{d}_{SO}(n,k)-d_{SO}^{GA}(n,k),
\]
that is, the amount by which the largest minimum distance achieved by our method falls short of the bound. In particular, a gap of $0$ means that the constructed self-orthogonal code is optimal, i.e., it meets the bound.
Among the $469$ base codes, our method attains the bound for $73$ of them and falls short of it by $2$ for another $151$. After discarding codes that coincide in both their parameters $[n,k]$ and their automorphism group order $|\mathrm{Aut}|$ (which we cannot distinguish here), these yield $201$ distinct self-orthogonal codes:
\begin{itemize}
    \item $66$ \emph{new optimal} self-orthogonal codes ($\mathrm{gap}=0$): since $d_{SO}^{GA}(n,k)$ equals the upper bound $\overline{d}_{SO}(n,k)$, they are optimal, and are listed in Table~\ref{tab:so_embedding_gap0}.
    \item $135$ self-orthogonal codes with $\mathrm{gap}=2$: here $\overline{d}_{SO}(n,k)=d_{SO}^{GA}(n,k)+2$, so optimality is not yet settled; they attain the best minimum distance obtained so far and are listed in Table~\ref{tab:so_embedding_gap2}.
\end{itemize}
All codes reported in both tables are obtained by the genetic algorithm.

\begingroup
\footnotesize
\renewcommand{\arraystretch}{1.3}
\setlength{\tabcolsep}{4pt}

\begin{longtable}{c|c|c||c|c|c||c|c|c||c|c|c}
\caption{New optimal self-orthogonal codes attaining the bound
($\mathrm{gap}=0$, i.e.,
$d_{SO}^{GA}(n,k)=\overline{d}_{SO}(n,k)$) with
$9\leq k\leq 15$, obtained from shortest self-orthogonal embeddings
using a genetic algorithm (GA). Here $|\mathrm{Aut}|$ denotes the
order of the permutation automorphism group; codes coinciding in
both $[n,k]$ and $|\mathrm{Aut}|$ are listed once.}
\label{tab:so_embedding_gap0}\\

\hline
$[n,k]$ & $d_{SO}^{GA}$ & $|\mathrm{Aut}|$
& $[n,k]$ & $d_{SO}^{GA}$ & $|\mathrm{Aut}|$
& $[n,k]$ & $d_{SO}^{GA}$ & $|\mathrm{Aut}|$
& $[n,k]$ & $d_{SO}^{GA}$ & $|\mathrm{Aut}|$ \\
\hline
\endfirsthead

\multicolumn{12}{c}{}\\
\hline
$[n,k]$ & $d_{SO}^{GA}$ & $|\mathrm{Aut}|$
& $[n,k]$ & $d_{SO}^{GA}$ & $|\mathrm{Aut}|$
& $[n,k]$ & $d_{SO}^{GA}$ & $|\mathrm{Aut}|$
& $[n,k]$ & $d_{SO}^{GA}$ & $|\mathrm{Aut}|$ \\
\hline
\endhead

\multicolumn{12}{r}{}\\
\endfoot

\endlastfoot

[43,9] & 16 & 1
& [249,10] & 120 & 8
& [27,12] & 8 & 80640
& [30,14] & 8 & 30 \\
\hline
[49,9] & 20 & 1
& [22,11] & 6 & 887040
& [28,12] & 8 & 22464
& [31,14] & 8 & 96 \\
\hline
[50,9] & 20 & 1
& [23,11] & 8 & 10200960
& [29,12] & 8 & 67392
& [31,14] & 8 & 2 \\
\hline
[159,9] & 76 & 8
& [24,11] & 8 & 10200960
& [30,12] & 8 & 4
& [32,14] & 8 & 3584 \\
\hline
[160,9] & 76 & 1
& [25,11] & 8 & 20401920
& [44,12] & 16 & 32
& [73,14] & 28 & 1 \\
\hline
[208,9] & 100 & 1
& [27,11] & 8 & 144
& [46,12] & 16 & 32
& [81,14] & 32 & 128 \\
\hline
[215,9] & 104 & 2
& [28,11] & 8 & 384
& [47,12] & 16 & 32
& [127,14] & 56 & 889 \\
\hline
[216,9] & 104 & 1
& [188,11] & 88 & 4
& [125,12] & 56 & 1
& [31,15] & 8 & 2688 \\
\hline
[223,9] & 108 & 1
& [188,11] & 88 & 1
& [233,12] & 110 & 1
& [32,15] & 8 & 720 \\
\hline
[253,9] & 124 & 3600
& [189,11] & 88 & 2
& [26,13] & 6 & 11232
& [32,15] & 8 & 8 \\
\hline
[254,9] & 124 & 1440
& [189,11] & 88 & 1
& [30,13] & 8 & 8
& [33,15] & 8 & 930 \\
\hline
[42,10] & 16 & 64
& [191,11] & 90 & 2
& [31,13] & 8 & 4
& [81,15] & 32 & 7 \\
\hline
[49,10] & 20 & 2
& [251,11] & 120 & 1
& [39,13] & 12 & 1
& [82,15] & 32 & 64 \\
\hline
[50,10] & 20 & 3
& [252,11] & 120 & 1
& [78,13] & 32 & 128
& [82,15] & 32 & 1 \\
\hline
[58,10] & 24 & 42
& [24,12] & 8 & 244823040
& [79,13] & 32 & 128
& [128,15] & 56 & 113792 \\
\hline
[66,10] & 28 & 336
& [26,12] & 8 & 11232
& [126,13] & 56 & 14
& & & \\
\hline
[130,10] & 60 & 64
& [27,12] & 8 & 11232
& [127,13] & 56 & 14
& & & \\
\hline
\end{longtable}
\endgroup

\begingroup
\footnotesize
\renewcommand{\arraystretch}{1.3}
\setlength{\tabcolsep}{4pt}

\begin{longtable}{c|c|c||c|c|c||c|c|c||c|c|c}
\caption{Self-orthogonal codes lying two below the bound
($\mathrm{gap}=2$, i.e.,
$d_{SO}^{GA}(n,k)=\overline{d}_{SO}(n,k)-2$) with
$9\leq k\leq 15$, obtained from shortest self-orthogonal embeddings
using a genetic algorithm (GA). Here $|\mathrm{Aut}|$ denotes the
order of the permutation automorphism group; codes coinciding in
both $[n,k]$ and $|\mathrm{Aut}|$ are listed once.}
\label{tab:so_embedding_gap2}\\

\hline
$[n,k]$ & $d_{SO}^{GA}$ & $|\mathrm{Aut}|$
& $[n,k]$ & $d_{SO}^{GA}$ & $|\mathrm{Aut}|$
& $[n,k]$ & $d_{SO}^{GA}$ & $|\mathrm{Aut}|$
& $[n,k]$ & $d_{SO}^{GA}$ & $|\mathrm{Aut}|$ \\
\hline
\endfirsthead

\multicolumn{12}{c}{}\\
\hline
$[n,k]$ & $d_{SO}^{GA}$ & $|\mathrm{Aut}|$
& $[n,k]$ & $d_{SO}^{GA}$ & $|\mathrm{Aut}|$
& $[n,k]$ & $d_{SO}^{GA}$ & $|\mathrm{Aut}|$
& $[n,k]$ & $d_{SO}^{GA}$ & $|\mathrm{Aut}|$ \\
\hline
\endhead

\multicolumn{12}{r}{}\\
\endfoot

\endlastfoot

[44,9] & 16 & 2 & [187,11] & 86 & 1
& [38,13] & 10 & 1 & [82,14] & 32 & 1 \\
\hline
[52,9] & 20 & 1 & [190,11] & 88 & 1
& [40,13] & 12 & 1 & [83,14] & 32 & 2 \\
\hline
[83,9] & 36 & 1 & [253,11] & 120 & 2
& [41,13] & 12 & 1 & [147,14] & 64 & 1 \\
\hline
[84,9] & 36 & 1 & [25,12] & 6 & 120960
& [46,13] & 14 & 2 & [188,14] & 84 & 1 \\
\hline
[148,9] & 68 & 4 & [31,12] & 8 & 4
& [50,13] & 16 & 1 & [189,14] & 84 & 1 \\
\hline
[149,9] & 70 & 1 & [32,12] & 8 & 1
& [73,13] & 28 & 2 & [191,14] & 86 & 6 \\
\hline
[150,9] & 70 & 1 & [33,12] & 8 & 1
& [79,13] & 30 & 1 & [192,14] & 86 & 6 \\
\hline
[189,9] & 90 & 1 & [39,12] & 12 & 1
& [80,13] & 30 & 1 & [204,14] & 92 & 3 \\
\hline
[210,9] & 100 & 1 & [40,12] & 12 & 1
& [114,13] & 48 & 2 & [210,14] & 96 & 3 \\
\hline
[248,9] & 120 & 98304 & [41,12] & 12 & 2
& [146,13] & 64 & 20 & [211,14] & 96 & 3 \\
\hline
[250,9] & 120 & 9216 & [44,12] & 14 & 1
& [155,13] & 68 & 15 & [228,14] & 104 & 1 \\
\hline
[251,9] & 120 & 6 & [78,12] & 30 & 1
& [159,13] & 70 & 3 & [229,14] & 104 & 1 \\
\hline
[43,10] & 14 & 1 & [124,12] & 54 & 1
& [176,13] & 78 & 14 & [30,15] & 6 & 576 \\
\hline
[84,10] & 36 & 1 & [158,12] & 70 & 408
& [190,13] & 86 & 2 & [34,15] & 8 & 16128 \\
\hline
[85,10] & 36 & 1 & [188,12] & 86 & 1
& [191,13] & 86 & 6 & [34,15] & 8 & 1 \\
\hline
[149,10] & 68 & 2 & [189,12] & 86 & 1
& [203,13] & 92 & 3 & [35,15] & 8 & 1 \\
\hline
[167,10] & 78 & 1 & [201,12] & 92 & 1
& [210,13] & 96 & 2 & [36,15] & 8 & 2 \\
\hline
[188,10] & 88 & 2 & [206,12] & 94 & 24
& [224,13] & 102 & 8 & [37,15] & 8 & 6 \\
\hline
[189,10] & 88 & 2 & [207,12] & 96 & 1
& [227,13] & 104 & 2 & [39,15] & 10 & 1 \\
\hline
[190,10] & 90 & 2 & [229,12] & 106 & 1
& [28,14] & 6 & 11232 & [41,15] & 10 & 1 \\
\hline
[191,10] & 90 & 2 & [230,12] & 108 & 1
& [33,14] & 8 & 1152 & [48,15] & 14 & 1 \\
\hline
[244,10] & 116 & 8 & [231,12] & 108 & 1
& [34,14] & 8 & 1 & [50,15] & 14 & 1 \\
\hline
[248,10] & 118 & 8 & [232,12] & 108 & 1
& [35,14] & 8 & 2 & [53,15] & 16 & 32 \\
\hline
[251,10] & 120 & 32 & [251,12] & 118 & 3
& [35,14] & 8 & 3 & [58,15] & 18 & 1 \\
\hline
[31,11] & 8 & 1 & [252,12] & 118 & 3
& [38,14] & 10 & 1 & [74,15] & 26 & 1 \\
\hline
[37,11] & 12 & 1 & [27,13] & 6 & 138240
& [40,14] & 10 & 1 & [77,15] & 28 & 1 \\
\hline
[39,11] & 12 & 1 & [28,13] & 6 & 138240
& [41,14] & 12 & 1 & [83,15] & 32 & 1 \\
\hline
[41,11] & 14 & 1 & [28,13] & 6 & 1152
& [46,14] & 14 & 1 & [84,15] & 32 & 2 \\
\hline
[42,11] & 14 & 64 & [29,13] & 6 & 276480
& [47,14] & 14 & 1 & [85,15] & 32 & 1 \\
\hline
[43,11] & 14 & 32 & [30,13] & 6 & 829440
& [48,14] & 14 & 1 & [149,15] & 64 & 128 \\
\hline
[46,11] & 16 & 1 & [33,13] & 8 & 1
& [56,14] & 18 & 1 & [190,15] & 84 & 2 \\
\hline
[104,11] & 44 & 1 & [34,13] & 8 & 2
& [64,14] & 22 & 1 & [193,15] & 86 & 6 \\
\hline
[119,11] & 52 & 1 & [34,13] & 8 & 1
& [74,14] & 28 & 1 & [194,15] & 86 & 6 \\
\hline
[121,11] & 54 & 1 & [35,13] & 8 & 1
& [80,14] & 30 & 64 & & & \\
\hline
\end{longtable}
\endgroup
The generator matrices of resulting self-orthogonal embeddings are available at~\cite{github}.

\subsection{Running time}
To confirm the analysis of Subsection~\ref{sec:complexity} empirically, Table~\ref{tab:runtime} reports measured running times for one representative base code in each dimension $k=9,\ldots,15$. As predicted, the cost of a single fitness evaluation is governed by the number of cosets $2^{r}$: for the codes with $r=4$ it is only about $3$--$5\,\mu$s, whereas for the code with $r=10$ it rises to about $277\,\mu$s, i.e.\ roughly $60$--$80$ times larger, in agreement with the factor $2^{10-4}=64$. The one-time coset preprocessing and the total running time remain well under a second in all cases.

\begin{table}[htbp]
\centering
\caption{Measured running times of the genetic algorithm for one representative base code per dimension. Here $r=k-\dim\Hull(\mathcal{C})$, $m$ is the number of appended columns, and the average fitness-evaluation time confirms the $O(2^{r})$ cost of Subsection~\ref{sec:complexity}.}
\label{tab:runtime}
\small
\renewcommand{\arraystretch}{1.2}
\setlength{\tabcolsep}{5pt}
\begin{tabular}{c|c|c|r|r|r|r|r}
\Xhline{2\arrayrulewidth}
$[n,k]$ & $r$ & $m$ & Gen. & Evals & Avg.\ eval ($\mu$s) & Prep.\ (ms) & Total (s) \\
\Xhline{2\arrayrulewidth}
$[44,9]$   & $4$  & $5$  & $1$   & $120$    & $3.30$   & $0.045$ & $0.034$ \\
\hline
$[185,10]$ & $4$  & $5$  & $120$ & $128{,}362$ & $3.32$   & $0.086$ & $0.617$ \\
\hline
$[177,11]$ & $10$ & $11$ & $3$   & $3{,}255$   & $277.33$ & $0.568$ & $0.957$ \\
\hline
$[120,12]$ & $4$  & $5$  & $1$   & $120$    & $5.14$   & $0.292$ & $0.035$ \\
\hline
$[186,13]$ & $4$  & $5$  & $120$ & $128{,}185$ & $3.43$   & $0.625$ & $0.639$ \\
\hline
$[186,14]$ & $4$  & $5$  & $120$ & $128{,}094$ & $3.30$   & $1.179$ & $0.617$ \\
\hline
$[77,15]$  & $5$  & $5$  & $1$   & $120$    & $5.30$   & $2.001$ & $0.028$ \\
\Xhline{2\arrayrulewidth}
\end{tabular}
\end{table}
\subsection{Ablation study on guided crossover}
\label{sec:ablation-guided}

To assess the contribution of the guided crossover, we compare two configurations of the genetic algorithm on the $201$ codes that admit a self-orthogonal embedding attaining the bound (gap $0$, $66$ codes) or lying $2$ below it (gap $2$, $135$ codes). In the \emph{guided} configuration, crossover is applied with probability $0.8$; when applied, guided crossover is chosen with probability $0.55$ and random crossover with probability $0.45$. In the \emph{plain} configuration, crossover is likewise applied with probability $0.8$ but is always random ($100\%$ random crossover). All other operators and parameters are identical. To keep the running times of the two configurations comparable, the guided configuration is run for $30$ generations, whereas the plain configuration is run for $50$ generations.

Table~\ref{tab:ablation_guided} summarizes the outcome. The guided configuration attains all $66$ bound-meeting codes, whereas the plain configuration misses $2$ of them even with $20$ additional generations. On the gap-$2$ codes the two configurations are essentially tied ($132$ versus $133$ out of $135$), so guided crossover is not uniformly superior. Nevertheless, its ability to reach every bound-attaining target within a shorter generation budget indicates that it explores diverse regions of the search space and thereby escapes local optima to reach the optimal codes. For this reason we adopt guided crossover as a standard operator.

The running times of the two configurations are nearly identical, as reported in Table~\ref{tab:ablation_guided}: the per-case medians are close ($0.80$\,s versus $0.95$\,s) and the total wall-clock times differ by less than $4\%$ ($55$\,min\,$28$\,s versus $57$\,min\,$22$\,s).

\begin{table}[htbp]
\centering
\caption{Ablation of guided crossover on the $201$ codes ($66$ with gap $0$, $135$ with gap $2$). The columns ``gap $0$'' and ``gap $2$'' count the codes for which the configuration reaches the corresponding target, and the remaining columns report per-case running time.}
\label{tab:ablation_guided}
\small
\renewcommand{\arraystretch}{1.3}
\setlength{\tabcolsep}{5pt}
\begin{tabular}{l|c|c|r|r|r|r}
\Xhline{2\arrayrulewidth}
Configuration & gap $0$ ($/66$) & gap $2$ ($/135$) & Total time & Mean & Median & Max \\
\Xhline{2\arrayrulewidth}
Guided ($30$ gen.) & $66$ & $132$ & $3328.07$\,s ($\approx 55$\,m\,$28$\,s) & $16.56$\,s & $0.80$\,s & $249.28$\,s \\
\hline
Plain ($50$ gen.)  & $64$ & $133$ & $3442.18$\,s ($\approx 57$\,m\,$22$\,s) & $17.13$\,s & $0.95$\,s & $261.50$\,s \\
\Xhline{2\arrayrulewidth}
\end{tabular}
\end{table}

\subsection{Comparison with random search}
\label{sec:ga-vs-random}

To justify the use of a genetic algorithm, we compare it against random search under an equal time budget. Since the genetic algorithm already employs a local search of depth $8$, we consider two random baselines: random search with the same local search of depth $8$ (\emph{Random\,+\,LS(8)}) and pure random search with no local search (\emph{Pure random}). The genetic algorithm is the guided crossover version. For each code we run the genetic algorithm, record its wall-clock time, and grant each random baseline exactly that time as its budget. Because random search has no notion of generations, a time-based budget is the appropriate basis for a fair comparison. All three methods search the same feasible set of embeddings for the $201$ codes.

Table~\ref{tab:ga_vs_random} reports the results. Within the shared time budget, the genetic algorithm attains all $66$ bound-meeting (gap $0$) codes, whereas Random\,+\,LS(8) attains only $57$ and Pure random only $49$. The genetic algorithm also finds the largest number of gap-$\leq 2$ codes ($198$ out of $201$) and the smallest mean gap. Moreover, it is never outperformed on any single case: it strictly improves on Random\,+\,LS(8) in $18$ codes and on Pure random in $28$ codes, with no losses.

The comparison also clarifies the role of local search: adding it to random search raises the number of optimal codes from $49$ to $57$, yet this is still well below the $66$ obtained by the genetic algorithm. Hence local refinement alone is not sufficient, and the evolutionary operators contribute beyond it. This is further supported by the number of candidates evaluated within the budget: Random\,+\,LS(8) evaluates about $6{,}200$ candidates per case (median) and Pure random about $39{,}000$; despite exploring an order of magnitude more candidates, pure random search finds the fewest optimal codes. Thus raw sampling volume does not compensate for the lack of guided evolution, which justifies the use of the genetic algorithm.

\begin{table}[htbp]
\centering
\caption{Genetic algorithm versus random search under an equal per-case time budget (the genetic algorithm running time), over the $201$ codes. ``gap $0$'' counts the bound-attaining (optimal) codes found, and ``gap $\leq 2$'' counts the codes within $2$ of the bound.}
\label{tab:ga_vs_random}
\small
\renewcommand{\arraystretch}{1.3}
\setlength{\tabcolsep}{6pt}
\begin{tabular}{l|c|c|c}
\Xhline{2\arrayrulewidth}
Method & gap $0$ ($/66$) & gap $\leq 2$ ($/201$) & Mean gap \\
\Xhline{2\arrayrulewidth}
Genetic algorithm (guided crossover, LS depth $8$) & $66$ & $198$ & $1.37$ \\
\hline
Random search $+$ LS depth $8$ & $57$ & $189$ & $1.55$ \\
\hline
Pure random search & $49$ & $187$ & $1.65$ \\
\Xhline{2\arrayrulewidth}
\end{tabular}
\end{table}
\FloatBarrier

\section{Conclusion}
In this paper, we developed a constraint-preserving genetic algorithm for constructing self-orthogonal codes through shortest self-orthogonal embeddings. On the theoretical side, we reformulated the search for a shortest self-orthogonal embedding of an $[n,k]$ code as the problem of finding a block $B$ satisfying the feasibility condition $BB^{T}=AA^{T}$, and we used the fact that all shortest embeddings are obtained by the action of orthogonal group $O(m,2)$. Building on the result that $O(m,2)$ is generated by the weight-four transvections for $m\geq5$, we encoded each chromosome as a sequence of such transvections; in this representation every chromosome yields a feasible embedding, so the crossover, mutation, and local-search operators automatically preserve the self-orthogonality constraint. 
Using this framework, we found $66$ new optimal self-orthogonal codes and $135$ further self-orthogonal codes of best known minimum distance in dimensions $9\leq k\leq15$, whereas previous computational searches were limited to dimensions up to $8$. An ablation study and a comparison with random search under an equal time budget further confirm that the genetic algorithm, and in particular the guided crossover, is an efficient and effective method for this problem.

\section*{Acknowledgement}
This research (J.-L. Kim) was supported in part by the BK21 FOUR (Fostering Outstanding Universities for Research) funded by the Ministry of Education (MOE, Korea), National Research Foundation of Korea (NRF) under Grant No. 4120240415042, Basic Science Research Program through the National Research Foundation of Korea (NRF) funded by the Ministry of Science and ICT under Grant No. RS-2025-24534992 and Global - Learning \& Academic research institution for Master’s·PhD students, and Postdocs(LAMP) Program of the National Research Foundation of Korea(NRF) grant funded by the Ministry of Education(No. RS-2024-00441954).

\section*{Appendix}
Here, we present the upper bounds on the minimum distances of self-orthogonal codes over $\mathbb{F}_2$ with dimensions $9\le k\le 15$ and lengths $20\le n\le 256$. In Table~\ref{tab:full-so-upper-bounds}, each entry represents an upper bound on the minimum distance of the corresponding self-orthogonal $[n,k]$ linear code over $\mathbb{F}_2$.

\begingroup
\sffamily\footnotesize
\setlength{\tabcolsep}{1pt}
\renewcommand{\arraystretch}{0.98}

\refstepcounter{table}
\label{tab:full-so-upper-bounds}
\par\addvspace{6pt}
\noindent\textbf{\tablename~\thetable}\par
\noindent Upper bounds on the minimum distances of binary self-orthogonal codes with $9\leq k\leq15$ and $20\leq n\leq256$.\par\vspace{4pt}
\addtocounter{table}{-1}

\begin{longtable}{>{\centering\arraybackslash}p{0.68cm}|*{7}{>{\centering\arraybackslash}p{0.76cm}}@{\hspace{0.7em}\vrule width 0.3pt\hspace{0.7em}}>{\centering\arraybackslash}p{0.68cm}|*{7}{>{\centering\arraybackslash}p{0.76cm}}}
\toprule
\multicolumn{1}{c|}{$n\backslash k$} & $9$ & $10$ & $11$ & $12$ & $13$ & $14$ & $15$ & \multicolumn{1}{c|}{$n\backslash k$} & $9$ & $10$ & $11$ & $12$ & $13$ & $14$ & $15$ \\
\midrule
\endfirsthead
\multicolumn{16}{c}{\tablename~\thetable{} -- continued from previous page}\\
\toprule
\multicolumn{1}{c|}{$n\backslash k$} & $9$ & $10$ & $11$ & $12$ & $13$ & $14$ & $15$ & \multicolumn{1}{c|}{$n\backslash k$} & $9$ & $10$ & $11$ & $12$ & $13$ & $14$ & $15$ \\
\midrule
\endhead
\midrule
\multicolumn{16}{r}{Continued on next page}\\
\endfoot
\bottomrule
\endlastfoot
20 &  &  & 4 & 4 & 4 & 4 & 2 & 139 & 64 & 64 & 64 & 64 & 64 & 62 & 62 \\
21 &  &  & 6 & 4 & 4 & 4 & 4 & 140 & 66 & 64 & 64 & 64 & 64 & 62 & 62 \\
22 &  &  & 6 & 6 & 4 & 4 & 4 & 141 & 66 & 66 & 64 & 64 & 64 & 64 & 62 \\
23 &  &  & 8 & 6 & 4 & 4 & 4 & 142 & 68 & 66 & 66 & 64 & 64 & 64 & 64 \\
24 &  &  & 8 & 8 & 6 & 4 & 4 & 143 & 68 & 68 & 66 & 64 & 64 & 64 & 64 \\
25 &  &  & 8 & 8 & 6 & 6 & 4 & 144 & 68 & 68 & 68 & 66 & 64 & 64 & 64 \\
26 &  &  & 8 & 8 & 6 & 6 & 6 & 145 & 68 & 68 & 68 & 66 & 66 & 64 & 64 \\
27 &  &  & 8 & 8 & 8 & 6 & 6 & 146 & 68 & 68 & 68 & 68 & 66 & 66 & 64 \\
28 &  &  & 8 & 8 & 8 & 8 & 6 & 147 & 70 & 68 & 68 & 68 & 68 & 66 & 66 \\
29 &  &  & 8 & 8 & 8 & 8 & 6 & 148 & 70 & 70 & 68 & 68 & 68 & 66 & 66 \\
30 &  &  & 10 & 8 & 8 & 8 & 8 & 149 & 72 & 70 & 70 & 68 & 68 & 68 & 66 \\
31 &  &  & 10 & 10 & 8 & 8 & 8 & 150 & 72 & 72 & 70 & 68 & 68 & 68 & 68 \\
32 &  &  & 12 & 10 & 10 & 8 & 8 & 151 & 72 & 72 & 70 & 70 & 68 & 68 & 68 \\
33 &  &  & 12 & 10 & 10 & 10 & 8 & 152 & 72 & 72 & 72 & 70 & 70 & 68 & 68 \\
34 &  &  & 12 & 12 & 10 & 10 & 10 & 153 & 72 & 72 & 72 & 70 & 70 & 70 & 68 \\
35 &  &  & 12 & 12 & 10 & 10 & 10 & 154 & 72 & 72 & 72 & 72 & 70 & 70 & 70 \\
36 &  &  & 12 & 12 & 12 & 10 & 10 & 155 & 74 & 72 & 72 & 72 & 70 & 70 & 70 \\
37 &  &  & 14 & 12 & 12 & 12 & 10 & 156 & 74 & 74 & 72 & 72 & 72 & 70 & 70 \\
38 &  &  & 14 & 14 & 12 & 12 & 12 & 157 & 76 & 74 & 72 & 72 & 72 & 72 & 70 \\
39 &  &  & 14 & 14 & 12 & 12 & 12 & 158 & 76 & 76 & 74 & 72 & 72 & 72 & 72 \\
40 & 16 & 16 & 14 & 14 & 14 & 12 & 12 & 159 & 76 & 76 & 74 & 74 & 72 & 72 & 72 \\
41 & 16 & 16 & 16 & 14 & 14 & 14 & 12 & 160 & 76 & 76 & 76 & 74 & 74 & 72 & 72 \\
42 & 16 & 16 & 16 & 16 & 14 & 14 & 14 & 161 & 76 & 76 & 76 & 74 & 74 & 72 & 72 \\
43 & 16 & 16 & 16 & 16 & 14 & 14 & 14 & 162 & 78 & 76 & 76 & 76 & 74 & 74 & 72 \\
44 & 18 & 16 & 16 & 16 & 16 & 14 & 14 & 163 & 78 & 78 & 76 & 76 & 74 & 74 & 74 \\
45 & 18 & 18 & 16 & 16 & 16 & 16 & 14 & 164 & 80 & 78 & 76 & 76 & 76 & 74 & 74 \\
46 & 20 & 18 & 18 & 16 & 16 & 16 & 16 & 165 & 80 & 78 & 78 & 76 & 76 & 76 & 74 \\
47 & 20 & 20 & 18 & 16 & 16 & 16 & 16 & 166 & 80 & 80 & 78 & 76 & 76 & 76 & 76 \\
48 & 20 & 20 & 18 & 18 & 16 & 16 & 16 & 167 & 80 & 80 & 78 & 78 & 76 & 76 & 76 \\
49 & 20 & 20 & 20 & 18 & 18 & 16 & 16 & 168 & 80 & 80 & 80 & 78 & 78 & 76 & 76 \\
50 & 20 & 20 & 20 & 20 & 18 & 18 & 16 & 169 & 80 & 80 & 80 & 78 & 78 & 76 & 76 \\
51 & 22 & 20 & 20 & 20 & 20 & 18 & 18 & 170 & 80 & 80 & 80 & 80 & 78 & 78 & 76 \\
52 & 22 & 22 & 20 & 20 & 20 & 18 & 18 & 171 & 82 & 80 & 80 & 80 & 78 & 78 & 78 \\
53 & 22 & 22 & 22 & 20 & 20 & 20 & 18 & 172 & 82 & 82 & 80 & 80 & 80 & 78 & 78 \\
54 & 24 & 22 & 22 & 22 & 20 & 20 & 20 & 173 & 84 & 82 & 80 & 80 & 80 & 80 & 78 \\
55 & 24 & 24 & 22 & 22 & 20 & 20 & 20 & 174 & 84 & 84 & 82 & 80 & 80 & 80 & 80 \\
56 & 24 & 24 & 24 & 22 & 22 & 20 & 20 & 175 & 84 & 84 & 82 & 82 & 80 & 80 & 80 \\
57 & 24 & 24 & 24 & 22 & 22 & 20 & 20 & 176 & 84 & 84 & 84 & 82 & 80 & 80 & 80 \\
58 & 24 & 24 & 24 & 24 & 22 & 22 & 20 & 177 & 84 & 84 & 84 & 82 & 82 & 80 & 80 \\
59 & 26 & 24 & 24 & 24 & 24 & 22 & 22 & 178 & 86 & 84 & 84 & 84 & 82 & 82 & 80 \\
60 & 26 & 24 & 24 & 24 & 24 & 22 & 22 & 179 & 86 & 86 & 84 & 84 & 82 & 82 & 82 \\
61 & 26 & 26 & 24 & 24 & 24 & 24 & 22 & 180 & 88 & 86 & 84 & 84 & 84 & 82 & 82 \\
62 & 28 & 26 & 26 & 24 & 24 & 24 & 24 & 181 & 88 & 86 & 86 & 84 & 84 & 84 & 82 \\
63 & 28 & 28 & 26 & 26 & 24 & 24 & 24 & 182 & 88 & 88 & 86 & 84 & 84 & 84 & 84 \\
64 & 28 & 28 & 26 & 26 & 26 & 24 & 24 & 183 & 88 & 88 & 86 & 86 & 84 & 84 & 84 \\
65 & 28 & 28 & 28 & 26 & 26 & 24 & 24 & 184 & 88 & 88 & 88 & 86 & 84 & 84 & 84 \\
66 & 30 & 28 & 28 & 28 & 26 & 26 & 24 & 185 & 88 & 88 & 88 & 86 & 86 & 84 & 84 \\
67 & 30 & 28 & 28 & 28 & 28 & 26 & 26 & 186 & 90 & 88 & 88 & 88 & 86 & 86 & 84 \\
68 & 30 & 30 & 28 & 28 & 28 & 28 & 26 & 187 & 90 & 88 & 88 & 88 & 86 & 86 & 86 \\
69 & 30 & 30 & 28 & 28 & 28 & 28 & 26 & 188 & 90 & 90 & 88 & 88 & 88 & 86 & 86 \\
70 & 32 & 30 & 30 & 28 & 28 & 28 & 28 & 189 & 92 & 90 & 88 & 88 & 88 & 86 & 86 \\
71 & 32 & 32 & 30 & 30 & 28 & 28 & 28 & 190 & 92 & 92 & 90 & 88 & 88 & 88 & 86 \\
72 & 32 & 32 & 32 & 30 & 30 & 28 & 28 & 191 & 92 & 92 & 90 & 90 & 88 & 88 & 88 \\
73 & 32 & 32 & 32 & 30 & 30 & 28 & 28 & 192 & 92 & 92 & 92 & 90 & 88 & 88 & 88 \\
74 & 32 & 32 & 32 & 32 & 30 & 30 & 28 & 193 & 94 & 92 & 92 & 90 & 90 & 88 & 88 \\
75 & 32 & 32 & 32 & 32 & 32 & 30 & 30 & 194 & 94 & 92 & 92 & 92 & 90 & 90 & 88 \\
76 & 34 & 32 & 32 & 32 & 32 & 32 & 30 & 195 & 94 & 94 & 92 & 92 & 90 & 90 & 90 \\
77 & 34 & 34 & 32 & 32 & 32 & 32 & 30 & 196 & 96 & 94 & 92 & 92 & 90 & 90 & 90 \\
78 & 36 & 34 & 34 & 32 & 32 & 32 & 32 & 197 & 96 & 94 & 94 & 92 & 92 & 90 & 90 \\
79 & 36 & 36 & 34 & 34 & 32 & 32 & 32 & 198 & 96 & 96 & 94 & 92 & 92 & 92 & 90 \\
80 & 36 & 36 & 34 & 34 & 32 & 32 & 32 & 199 & 96 & 96 & 96 & 94 & 92 & 92 & 92 \\
81 & 36 & 36 & 36 & 34 & 34 & 32 & 32 & 200 & 96 & 96 & 96 & 94 & 92 & 92 & 92 \\
82 & 36 & 36 & 36 & 36 & 34 & 34 & 32 & 201 & 96 & 96 & 96 & 94 & 94 & 92 & 92 \\
83 & 38 & 36 & 36 & 36 & 36 & 34 & 34 & 202 & 96 & 96 & 96 & 96 & 94 & 94 & 92 \\
84 & 38 & 38 & 36 & 36 & 36 & 36 & 34 & 203 & 98 & 96 & 96 & 96 & 94 & 94 & 94 \\
85 & 40 & 38 & 38 & 36 & 36 & 36 & 34 & 204 & 98 & 96 & 96 & 96 & 96 & 94 & 94 \\
86 & 40 & 38 & 38 & 36 & 36 & 36 & 36 & 205 & 100 & 98 & 96 & 96 & 96 & 94 & 94 \\
87 & 40 & 40 & 38 & 38 & 36 & 36 & 36 & 206 & 100 & 98 & 98 & 96 & 96 & 96 & 94 \\
88 & 40 & 40 & 38 & 38 & 38 & 36 & 36 & 207 & 100 & 100 & 98 & 98 & 96 & 96 & 96 \\
89 & 40 & 40 & 40 & 38 & 38 & 38 & 36 & 208 & 100 & 100 & 100 & 98 & 96 & 96 & 96 \\
90 & 42 & 40 & 40 & 40 & 38 & 38 & 36 & 209 & 100 & 100 & 100 & 100 & 98 & 96 & 96 \\
91 & 42 & 40 & 40 & 40 & 40 & 38 & 38 & 210 & 102 & 100 & 100 & 100 & 98 & 98 & 96 \\
92 & 42 & 40 & 40 & 40 & 40 & 40 & 38 & 211 & 102 & 100 & 100 & 100 & 98 & 98 & 98 \\
93 & 44 & 42 & 40 & 40 & 40 & 40 & 38 & 212 & 104 & 102 & 100 & 100 & 100 & 98 & 98 \\
94 & 44 & 42 & 42 & 40 & 40 & 40 & 40 & 213 & 104 & 102 & 102 & 100 & 100 & 100 & 98 \\
95 & 44 & 44 & 42 & 42 & 40 & 40 & 40 & 214 & 104 & 104 & 102 & 100 & 100 & 100 & 100 \\
96 & 44 & 44 & 42 & 42 & 42 & 40 & 40 & 215 & 104 & 104 & 104 & 102 & 100 & 100 & 100 \\
97 & 44 & 44 & 44 & 42 & 42 & 40 & 40 & 216 & 104 & 104 & 104 & 102 & 102 & 100 & 100 \\
98 & 46 & 44 & 44 & 44 & 42 & 42 & 40 & 217 & 104 & 104 & 104 & 102 & 102 & 100 & 100 \\
99 & 46 & 44 & 44 & 44 & 44 & 42 & 42 & 218 & 106 & 104 & 104 & 104 & 102 & 102 & 100 \\
100 & 46 & 46 & 44 & 44 & 44 & 42 & 42 & 219 & 106 & 104 & 104 & 104 & 102 & 102 & 102 \\
101 & 48 & 46 & 46 & 44 & 44 & 44 & 42 & 220 & 108 & 106 & 104 & 104 & 102 & 102 & 102 \\
102 & 48 & 46 & 46 & 46 & 44 & 44 & 44 & 221 & 108 & 106 & 106 & 104 & 104 & 102 & 102 \\
103 & 48 & 48 & 46 & 46 & 44 & 44 & 44 & 222 & 108 & 108 & 106 & 104 & 104 & 104 & 102 \\
104 & 48 & 48 & 46 & 46 & 46 & 44 & 44 & 223 & 108 & 108 & 108 & 106 & 104 & 104 & 104 \\
105 & 48 & 48 & 48 & 46 & 46 & 46 & 44 & 224 & 108 & 108 & 108 & 106 & 104 & 104 & 104 \\
106 & 48 & 48 & 48 & 48 & 46 & 46 & 44 & 225 & 110 & 108 & 108 & 106 & 106 & 104 & 104 \\
107 & 48 & 48 & 48 & 48 & 48 & 46 & 46 & 226 & 110 & 108 & 108 & 108 & 106 & 106 & 104 \\
108 & 50 & 48 & 48 & 48 & 48 & 46 & 46 & 227 & 110 & 110 & 108 & 108 & 106 & 106 & 106 \\
109 & 50 & 50 & 48 & 48 & 48 & 48 & 46 & 228 & 112 & 110 & 110 & 108 & 106 & 106 & 106 \\
110 & 52 & 50 & 50 & 48 & 48 & 48 & 48 & 229 & 112 & 112 & 110 & 108 & 108 & 106 & 106 \\
111 & 52 & 52 & 50 & 48 & 48 & 48 & 48 & 230 & 112 & 112 & 110 & 110 & 108 & 108 & 106 \\
112 & 52 & 52 & 50 & 50 & 48 & 48 & 48 & 231 & 112 & 112 & 112 & 110 & 108 & 108 & 108 \\
113 & 52 & 52 & 52 & 50 & 50 & 48 & 48 & 232 & 112 & 112 & 112 & 110 & 108 & 108 & 108 \\
114 & 52 & 52 & 52 & 52 & 50 & 50 & 48 & 233 & 112 & 112 & 112 & 110 & 110 & 108 & 108 \\
115 & 54 & 52 & 52 & 52 & 52 & 50 & 50 & 234 & 114 & 112 & 112 & 112 & 110 & 110 & 108 \\
116 & 54 & 54 & 52 & 52 & 52 & 50 & 50 & 235 & 114 & 112 & 112 & 112 & 112 & 110 & 110 \\
117 & 56 & 54 & 54 & 52 & 52 & 52 & 50 & 236 & 116 & 114 & 112 & 112 & 112 & 112 & 110 \\
118 & 56 & 56 & 54 & 52 & 52 & 52 & 52 & 237 & 116 & 114 & 114 & 112 & 112 & 112 & 112 \\
119 & 56 & 56 & 54 & 54 & 52 & 52 & 52 & 238 & 116 & 116 & 114 & 112 & 112 & 112 & 112 \\
120 & 56 & 56 & 56 & 54 & 54 & 52 & 52 & 239 & 116 & 116 & 116 & 114 & 112 & 112 & 112 \\
121 & 56 & 56 & 56 & 54 & 54 & 54 & 52 & 240 & 116 & 116 & 116 & 114 & 112 & 112 & 112 \\
122 & 56 & 56 & 56 & 56 & 54 & 54 & 54 & 241 & 116 & 116 & 116 & 116 & 114 & 112 & 112 \\
123 & 58 & 56 & 56 & 56 & 56 & 54 & 54 & 242 & 118 & 116 & 116 & 116 & 114 & 114 & 112 \\
124 & 58 & 58 & 56 & 56 & 56 & 54 & 54 & 243 & 118 & 118 & 116 & 116 & 116 & 114 & 114 \\
125 & 60 & 58 & 56 & 56 & 56 & 56 & 54 & 244 & 120 & 118 & 116 & 116 & 116 & 114 & 114 \\
126 & 60 & 58 & 58 & 56 & 56 & 56 & 56 & 245 & 120 & 120 & 118 & 116 & 116 & 114 & 114 \\
127 & 60 & 60 & 58 & 58 & 56 & 56 & 56 & 246 & 120 & 120 & 118 & 118 & 116 & 116 & 114 \\
128 & 60 & 60 & 60 & 58 & 58 & 56 & 56 & 247 & 120 & 120 & 120 & 118 & 116 & 116 & 116 \\
129 & 60 & 60 & 60 & 58 & 58 & 56 & 56 & 248 & 122 & 120 & 120 & 120 & 118 & 116 & 116 \\
130 & 62 & 60 & 60 & 60 & 58 & 58 & 56 & 249 & 122 & 120 & 120 & 120 & 118 & 118 & 116 \\
131 & 62 & 60 & 60 & 60 & 60 & 58 & 58 & 250 & 122 & 120 & 120 & 120 & 120 & 118 & 116 \\
132 & 62 & 62 & 60 & 60 & 60 & 58 & 58 & 251 & 122 & 122 & 120 & 120 & 120 & 120 & 118 \\
133 & 64 & 62 & 62 & 60 & 60 & 60 & 58 & 252 & 124 & 122 & 120 & 120 & 120 & 120 & 118 \\
134 & 64 & 64 & 62 & 60 & 60 & 60 & 60 & 253 & 124 & 124 & 122 & 120 & 120 & 120 & 120 \\
135 & 64 & 64 & 62 & 62 & 60 & 60 & 60 & 254 & 124 & 124 & 122 & 120 & 120 & 120 & 120 \\
136 & 64 & 64 & 64 & 62 & 62 & 60 & 60 & 255 & 124 & 124 & 124 & 122 & 120 & 120 & 120 \\
137 & 64 & 64 & 64 & 62 & 62 & 62 & 60 & 256 & 128 & 124 & 124 & 122 & 120 & 120 & 120 \\
138 & 64 & 64 & 64 & 64 & 62 & 62 & 62 &  &  &  &  &  &  &  &  \\
\end{longtable}
\endgroup

\end{document}